\newcommand{\Span}[1]{\mathrm{span}_{\mathbb{R}}({#1})}
\newcommand{\degree}[1]{\mathrm{deg}({#1})}
\newcommand{\degreek}[2]{\mathrm{deg}_{{#1}}({#2})}
\newcommand{\gnorm}[1]{\norm{{#1}}_{\mathrm{g},X}}
\newcommand{\cnorm}[1]{\norm{{#1}}_{\mathrm{c}}}
\newcommand{\gwnorm}[1]{\norm{{#1}}_{\mathrm{gw},X}}
\newcommand{\gwnormx}[2]{\norm{{#1}}_{\mathrm{gw},{#2}}}
\newcommand{\pdiv}[2]{\frac{\partial {#1}}{\partial {#2}}}
\theoremstyle{definition}
\newtheorem{theorem}{Theorem}[section]
\newtheorem{lemma}[theorem]{Lemma}
\newtheorem{proposition}[theorem]{Proposition}
\newtheorem{definition}[theorem]{Definition}
\newtheorem{example}[theorem]{Example}
\newtheorem{remark}[theorem]{Remark}
\begin{document}

\title{Border basis computation \\ with gradient-weighted normalization}

\author{Hiroshi Kera}
\author{Hiroshi Kera\footnote{Graduate School of Engineering, Chiba University. Corresponding author: Hiroshi Kera (e-mail: kera.hiroshi@gmail.com).}}
\date{}

\maketitle

\begin{abstract}
    Normalization of polynomials plays a vital role in the approximate basis computation of vanishing ideals. Coefficient normalization, which normalizes a polynomial with its coefficient norm, is the most common method in computer algebra. This study proposes the gradient-weighted normalization method for the approximate border basis computation of vanishing ideals, inspired by recent developments in machine learning. The data-dependent nature of gradient-weighted normalization leads to better stability against perturbation and consistency in the scaling of input points, which cannot be attained by coefficient normalization.
    Only a subtle change is needed to introduce gradient normalization in the existing algorithms with coefficient normalization. 
    The analysis of algorithms still works with a small modification, and the order of magnitude of time complexity of algorithms remains unchanged.
    We also prove that, with coefficient normalization, which does not provide the scaling consistency property, scaling of points (e.g., as a preprocessing) can cause an approximate basis computation to fail. This study is the first to theoretically highlight the crucial effect of scaling in approximate basis computation and presents the utility of data-dependent normalization.
\end{abstract}

\maketitle

\section{Introduction}
Given a set of points $X \subset \mathbb{R}^n$,  the vanishing ideal of $X$ is the set of polynomials in $\mathbb{R}[x_1,\ldots,x_n]$ that vanish for any $\mathbf{x}\in X$. 
\begin{align}
\mathcal{I}(X) = \qty{g \in \mathbb{R}[x_1,\ldots,x_n] \mid \forall \mathbf{x} \in X, g(\mathbf{x}) = 0}. 
\end{align}
The approximate computation of bases of vanishing ideals has been extensively studied~\cite{abbott2008stable,heldt2009approximate,fassino2010almost,robbiano2010approximate,limbeck2013computation,livni2013vanishing,kiraly2014dual,kera2018approximate,kera2019spurious,kera2020gradient,wirth2022conditional} in the last decade, where a basis comprises approximately vanishing polynomials, i.e., $g(\mathbf{x})\approx 0, (\forall\mathbf{x}\in X)$. Approximate basis computation and approximately vanishing polynomials are exploited in various fields such as dynamics reconstruction, signal processing, and machine learning~\cite{torrente2009application,hou2016discriminative,kera2016vanishing,kera2016noise,wang2018nonlinear,wang2019polynomial,antonova2020analytic,karimov2020algebraic}. A wide variety of applications is possible because the approximate basis computation takes a set of noisy points as its input---suitable for the recent data-driven applications---and efficiently computes a set of multivariate polynomials that characterize the given data.

Coefficient normalization, where polynomials are normalized to gain a unit coefficient norm, is the most common choice in computer algebra. In contrast, in machine learning, the basis computation of vanishing ideals is performed in a monomial-agnostic manner to sidestep symbolic computation and term orderings~\cite{livni2013vanishing,kiraly2014dual,kera2019spurious}. In this case, efficient access to the coefficients of terms is not possible. Thus, polynomials are handled without proper normalization. A recent study solved this problem by gradient normalization~\cite{kera2020gradient}, which used the gradient (semi-)norm $\sqrt{\sum_{x \in X} \norm{\nabla g(x)}^2}$. Interestingly, the data-dependent nature of gradient normalization provides new properties that could not be realized by other basis computation algorithms. However, the direct application of gradient normalization to the monomial-aware basis computation in computer algebra does not take over these advantages and merely increases the computational cost. Thus, an effective data-dependent normalization remains unexplored for computer-algebraic approaches. 

In this study, we propose a new normalization, called gradient-weighted normalization, which is a hybrid of coefficient normalization conventionally used in computer algebra and the gradient normalization recently developed in machine learning. Gradient-weighted normalization can be applied to several existing basis computation algorithms for vanishing ideals in computer algebra. In particular, we focus on the approximate computation of border bases because these are the most common choices in the approximate computation~\cite{abbott2008stable,heldt2009approximate,limbeck2013computation} as they have greater numerical stability than the Gr\"obner bases~\cite{stetter2004numerical,fassino2010almost}. 
We highlight the following advantages of gradient-weighted normalization in the approximate border basis computation. As an example, we analyze the approximate Buchberger--M\"oller (ABM) algorithm~\cite{limbeck2013computation}. 
\begin{itemize}
\item Gradient-weighted normalization realizes an approximate border basis computation that outputs polynomials that are more robust against perturbation on the input points. 
\item With gradient-weighted normalization, eigendecomposition-based (or singular value decomposition (SVD)-based) approximate border basis computation methods are equipped with scaling consistency; scaling input points does not change the configuration of the output basis and only linearly scales the evaluation values for the input points.
\item Gradient-weighted normalization only requires a small modification to an algorithm to work with, causing subtle changes in the analysis of the original one. Unlike gradient normalization, gradient-weighted normalization does not change the order of magnitude of time complexity.
\end{itemize}

In particular, the second advantage, the scaling consistency, provides us an important insight into approximate basis computation: without it, not only the approximation tolerance $\epsilon$ but also a scaling factor $\alpha$ of points must be properly chosen. In Proposition~\ref{prop:coefficient-normalization-failure}, we prove that, under a mild condition, an approximate basis computation with coefficient normalization always fails if the scaling factor is not properly set. This result implies that even preprocessing of points (e.g., scaling points to range in $[-1, 1]$) for numerical stability can cause a failure of the approximate basis computation. 
We consider that this study reveals a new direction in approximate border basis computation toward data-dependent normalization and its analysis.

\section{Related Work}
The gradient of polynomials has been exploited for approximate computation of vanishing ideals in several studies. In~\cite{abbott2008stable}, the first-order approximation (and thus gradient) of polynomials was computed to discover a set of monomials with their evaluation matrix still in full-rank for small perturbations in the points. Similarly, ~\cite{fassino2013simple} considered the first-order approximation of polynomials to compute a low-degree polynomial that approximately passed through the given points in terms of the geometrical distance. However, the former incurred a heavy computational cost and strong sensitivity to the hyperparameter $\gamma$, while the latter only focused on the lowest degree polynomial and did not give a basis. Furthermore, both methods used coefficient normalization.

In ~\cite{kera2020gradient}, which is the most similar to this study, polynomials normalized by the gradient norm were considered, which is, to the best our knowledge, the first data-dependent normalization to compute approximately vanishing polynomials. However, their method focuses on monomial-agnostic basis computation, where coefficients of terms are inaccessible. Although this can help when symbolic computation and term ordering are unfavorable, how helpful data-dependent normalization is in the monomial-\textit{aware} setting---the standard in computer algebra---is still unknown. The direct application of gradient normalization in border basis computation cannot fully exploit the advantages of monomial-agnostic basis computation. Furthermore, the method to relate the gradient norm to the coefficient norm, which plays an important role in approximate border bases, remains unclear. In this study, we propose gradient-weighted normalization, which brings all the merits of gradient normalization into monomial-aware computation while retaining the same order of magnitude of the time complexity of algorithms.
Furthermore, by exploiting the monomial-aware setting, a more detailed analysis is performed with gradient-weighted normalization. Particularly, the gradient-weighted norm of the terms and polynomials in basis computation can be lower and upper bounded. In addition, the coefficient norm can be upper bounded by the gradient-weighted norm.

\section{Preliminaries}
We consider a finite set of points $X\subset\mathbb{R}^n$, a polynomial ring $\mathcal{R}_n = \mathbb{R}[x_1,\ldots, x_n]$, and set of terms $\mathcal{T}_n \subset \mathcal{R}_n$, where $x_1,\ldots, x_n$ are indeterminates, throughout the paper. The vanishing ideal $\mathcal{I}(X)\subset\mathcal{R}_n$ is thus zero-dimensional. The definitions of the order ideal and the border basis are based on those in~\cite{kreuzer2005computational,kehrein2005charactorizations}, while the definitions of approximate notions are based on~\cite{heldt2009approximate}.

\begin{definition}
Given a set of points $X = \{\mathbf{x}_1,\mathbf{x}_2,...,\mathbf{x}_{N}\} \subset \mathbb{R}^n$, with gentle abuse of notation, the \textbf{evaluation vector} of a polynomial $h\in\mathcal{R}_n$ and its gradient $\nabla h$ are defined
as follows, respectively. 
\begin{align*}
h(X) & =\mqty(h(\mathbf{x}_{1}) & h(\mathbf{x}_{2}) & \cdots & h(\mathbf{x}_{N}))^{\top}\in\mathbb{R}^{N},\\
\nabla h(X) & = \mqty(\nabla h(\mathbf{x}_{1})^{\top} & \nabla h(\mathbf{x}_{2})^{\top} & \cdots & \nabla h(\mathbf{x}_{N})^{\top})^{\top}\in\mathbb{R}^{nN}.
\end{align*}
For a set of polynomials $H=\qty{ h_{1},h_{2},\ldots,h_{s}} \subset\mathcal{R}_n$ and the set of their gradients $\nabla H=\qty{ \nabla h_{1},\nabla h_{2},\ldots,\nabla h_{s}}$,
each \textbf{evaluation matrix} is defined as 
\begin{align*}
	H(X) & = \mqty(h_{1}(X) & h_{2}(X) & \cdots & h_{s}(X))\in\mathbb{R}^{N\times s}, \\
	\nabla H(X) & = \mqty(\nabla h_{1}(X) & \nabla h_{2}(X) & \cdots  & \nabla h_{s}(X))\in\mathbb{R}^{nN\times s}. 
\end{align*}
\end{definition}

\begin{definition}
A polynomial $f \in \mathcal{R}_n$ is said to be \textbf{unitary} if the norm of its coefficient vector equals one.
\end{definition}

\begin{definition}
A finite set of terms $\mathcal{O}\subset \mathcal{T}_n$ is called an \textbf{order ideal} if the following holds: if $t\in \mathcal{T}_n$ divides $o\in\mathcal{O}$, then $t \in \mathcal{O}$. The \textbf{border} of $\mathcal{O}$ is defined as $\partial\mathcal{O} = \qty(\bigcup_{k=1}^n x_k\mathcal{O}) \backslash \mathcal{O}$. 
\end{definition}

\begin{definition}\label{def:border-basis}
Let $\mathcal{O} \subset \mathcal{T}_n$ be an order ideal. Then, an \textbf{$\mathcal{O}$-border prebasis} $G$ is a set of polynomials in the form $b - \sum_{o\in O} c_o o$,
where $b \in \partial \mathcal{O}$, and $c_o \in \mathbb{R}$. If $\mathcal{O}$ is a basis of the $\mathbb{R}$-vector space $\mathbb{R}[x_1,x_2,\ldots, x_n]/I$, then $G$ is called an \textbf{$\mathcal{O}$-border basis} of an ideal $I$. 
\end{definition}

\begin{definition}
Given $\epsilon \ge 0$, a polynomial $g\in\mathcal{R}_n$ is said to be an \textbf{$\epsilon$-approximately vanishing} for a set of points $X\subset \mathbb{R}^n$, if $\|g(X)\|\le\epsilon$, where $\|\cdot\|$ denotes the $L_2$ norm.
\end{definition}

\begin{definition}
Given $\epsilon \ge 0$, an ideal $\mathcal{I}\subset\mathcal{R}_n$ is said to be an \textbf{$\epsilon$-approximate vanishing ideal} for a set of points $X\subset \mathbb{R}^n$ if there exists a system of unitary polynomials that generates $\mathcal{I}$ and is $\epsilon$-approximately vanishing for $X$. 
\end{definition}

\begin{remark}\label{rem:O-basis-property}
Let us consider the $\mathcal{O}$-border basis $G$ of the vanishing ideal $\mathcal{I}(X)$ of $X\subset\mathbb{R}^n$. The evaluation vectors of the order terms span $\mathbb{R}^{|X|}$. The evaluation vectors of the terms in $\mathcal{O}$ are linearly independent, and $|\mathcal{O}| = |X|$, where $|\cdot|$ denotes the cardinality of set. In the approximate case, the former still holds, and the latter becomes $|\mathcal{O}| \le |X|$.
\end{remark}

\paragraph{Other notation} We denote the support of a given polynomial by $\mathrm{supp}(\cdot)$ and the set of linear combinations of a given set of terms with coefficients in $\mathbb{R}$ by $\Span{\cdot}$. Further, $\cnorm{\cdot}$ denotes the coefficient norm of a polynomial (i.e., $\cnorm{g}=\sqrt{\sum_{i} c_i^2}$ for $g=\sum_{i}c_it_i$, where $c_i\in\mathbb{R}$ and $t_i\in\mathcal{T}_n$). The total degree of a polynomial is denoted by $\degree{\cdot}$ and $\degreek{k}{\cdot}$ denotes the degree of polynomial with respect to $x_k$. 

\section{Border bases with gradient-weighted normalization}
\begin{definition}\label{def:gradient-norm}
The \textbf{gradient norm} of a polynomial $g\in\mathcal{R}_n$ with respect to $X\subset\mathbb{R}^n$ is
$\gnorm{g} = \sqrt{\sum_{x \in X} \norm{\nabla g(x)}^2}/Z$,
where $Z = \sqrt{\sum_{k=1}^n \mathrm{deg}_k(g)^2}$ and $\gnorm{g}:=0$ if $g$ is a constant polynomial.
\end{definition} 

\begin{definition}
The \textbf{gradient-weighted norm}\footnote{Strictly speaking, this is a semi-norm because $\gwnorm{f}=0$ (and $\gnorm{f}=0$) for $f \in \mathcal{R}_n$ does not imply $f=0$. However, all the terms (except 1) and polynomials appearing in the border basis computation do not vanish with respect to the gradient-weighted norm. 
For simplicity, we refer to $\gwnorm{\cdot}$ (and $\gnorm{\cdot}$) as a norm in this study.} of a polynomial $g = \sum_{i}c_i t_i, (c_i\in\mathbb{R},t_i\in\mathcal{T}_n)$ is defined by $\gwnorm{g} = \sqrt{\sum_{i} c_i^2 \|t_i\|_{g,X}^2}$.
If the gradient-weighted norm of $g$ is equal to one, then $g$ is \textbf{gradient-weighted unitary}.
\end{definition}

\begin{remark}
For any term $t\in\mathcal{T}_n$, its gradient norm and gradient-weighted norm are identical, i.e., $\gnorm{t} = \gwnorm{t}$. 
The proofs in this study work with any constant $Z>0$; however, our choice of $Z$ provides simpler bounds.
\end{remark}

In general, the gradient-weighted norm and coefficient norms of a polynomial are not always correlated; a large gradient-weighted norm does not necessarily imply a large coefficient norm and vice versa. The following two examples illustrate this:

\begin{example}\label{example:zero-gcnorm}
Let us consider a polynomial $f = x^2y^2 - c \in \mathbb{R}[x,y], (c\in\mathbb{R})$. The gradient-weighted norm of $f$ is $\gwnorm{f} = 0$ for $X = \{(1, 0), (0, 1)\}$, whereas the coefficient norm $\cnorm{f}=\sqrt{1+c^2}$ can be arbitrarily enlarged by increasing $|c|$. 
\end{example}

\begin{example}\label{example:large-gcnorm}
Let us consider a polynomial $f = (x^2 + y^2 - 1)/\sqrt{3} \in \mathbb{R}[x,y]$. The coefficient norm of $f$ is $\cnorm{f}=1$, whereas the gradient-weighted norm for $X = \{(k, 0), (0, k)\}$ is $\gwnorm{f}=2\abs{k}/\sqrt{3}$, which 
can be arbitrarily enlarged by increasing $|k|$. 
\end{example}

Example~\ref{example:zero-gcnorm} also indicates that normalizing polynomials with their gradient-weighted norms is not always a valid approach because it could lead to zero-division. However, we can prove that gradient-weighted normalization is always valid in border basis computation. First, we prove the following lemma. 
\begin{lemma}\label{lemma:odiv-is-o}
Let $\mathcal{O}\subset\mathcal{T}_n$ be an order ideal. Then, the followings hold. 
\begin{enumerate}
    \item $\forall o\in\mathcal{O}\backslash\{1\}$, $\exists k \in \{1,\ldots,n\}$, $\pdiv{o}{x_k}/\degreek{k}{o} \in \mathcal{O}$. 
    \item $\forall b \in \partial\mathcal{O}$, $\exists k \in \{1,\ldots,n\}$, $\pdiv{b}{x_k}/\degreek{k}{b} \in \mathcal{O}$.
\end{enumerate}
\end{lemma}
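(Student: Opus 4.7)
The plan is to treat both statements as direct consequences of the definition of an order ideal combined with elementary monomial differentiation, so no deep machinery is needed. For statement (1), the plan is to write a nonconstant term $o = x_1^{a_1}\cdots x_n^{a_n} \in \mathcal{O}$ and observe that at least one exponent $a_k$ is positive because $o \neq 1$. Pick any such $k$. Then $\degreek{k}{o} = a_k$ and $\pdiv{o}{x_k} = a_k\, x_1^{a_1}\cdots x_k^{a_k-1}\cdots x_n^{a_n}$, so the quotient $\pdiv{o}{x_k}/\degreek{k}{o}$ equals the monomial $o/x_k$. This monomial properly divides $o$, and the defining property of an order ideal then yields $o/x_k \in \mathcal{O}$.

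For statement (2), the key observation is that $b \in \partial \mathcal{O} = \qty(\bigcup_{k=1}^n x_k\mathcal{O})\setminus\mathcal{O}$ means that there exists an index $k$ and a term $o \in \mathcal{O}$ with $b = x_k o$. The plan is to fix this specific $k$ (not an arbitrary variable appearing in $b$). For this $k$ we have $\degreek{k}{b} = \degreek{k}{o} + 1 \ge 1$, and differentiating the monomial $b$ with respect to $x_k$ produces $\pdiv{b}{x_k} = \degreek{k}{b}\cdot o$. Dividing by $\degreek{k}{b}$ isolates $o$, which lies in $\mathcal{O}$ by construction.

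Neither step presents a real obstacle; the lemma is essentially a bookkeeping exercise once the definitions are unpacked. The only mild subtlety worth flagging is in part (2): one must select the $k$ arising from the decomposition $b = x_k o$, since nothing forces $b/x_j$ to belong to $\mathcal{O}$ for other variables $x_j$ dividing $b$. With that caveat handled, each part reduces to a one-line monomial computation together with a single invocation of the order-ideal closure property under divisors.
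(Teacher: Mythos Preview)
Your proof is correct and follows essentially the same approach as the paper's own proof: both parts unpack the monomial, differentiate, and invoke the divisor-closure property of an order ideal (for part (1)) or the defining decomposition $b = x_k o$ (for part (2)). Your explicit remark that in part (2) one must use the index $k$ coming from the decomposition $b = x_k o$, rather than an arbitrary variable dividing $b$, is a helpful clarification that the paper leaves implicit.
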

\begin{proof}
\textit{Proof of (1).}
Note that, if $o \ne 1$, there always exists some $x_k$ such that $\degreek{k}{o} > 0$ (i.e., $\pdiv{o}{x_k} \ne 0$) because the total degree of $o$ is positive. Let $o = \prod_{l=1}^n x_l^{\alpha_l}\in\mathcal{O}$, where $\alpha_l\in \mathbb{Z}_{\ge 0}$ and $\alpha_k > 0$. Then, $\pdv*{o}{x_k} = \degreek{k}{o}x_k^{\alpha_{k}-1}\prod_{l\ne k} x_l^{\alpha_l}$. Because $x_k^{\alpha_{k}-1}\prod_{l\ne k} x_l^{\alpha_l}$ divides $o$, it holds $\pdiv{o}{x_k}/\degreek{k}{o} \in \mathcal{O}$. 

\textit{Proof of (2).}
For $b \in \partial\mathcal{O}$, we can write $b = x_k o$ for some $x_k$ and $o\in\mathcal{O}$. Thus, $b = x_k \prod_{l=1}^n x_l^{\alpha_l}\in\mathcal{O}$ and $\pdv*{b}{x_k} = (\alpha_k+1)\prod_{l=1}^n x_l^{\alpha_l} = \degreek{k}{b} o$; hence, $\pdiv{b}{x_k}/\degreek{k}{b} = o \in \mathcal{O}$. 
\end{proof}

Now, we prove the validity of gradient-weighted normalization in border basis computation.

\begin{proposition}\label{prop:gcnorm-theorem}
Let $G \subset\mathcal{R}_n$ be an $\mathcal{O}$-border basis of the vanishing ideal $\mathcal{I}(X)$ of $X\subset\mathbb{R}^n$. Then, the following holds.
\begin{enumerate}
\item Any $o\in\mathcal{O}\backslash \{1\}$ has a nonzero gradient-weighted norm, i.e., $\gwnorm{o} \ne 0$. 
\item Any border term $b\in\mathcal{\partial O}$ has a nonzero gradient-weighted norm, i.e., $\gwnorm{b} \ne 0$. 
\item Any $g\in G$ has a nonzero gradient-weighted norm, i.e., $\gwnorm{g} \ne 0$. 
\end{enumerate}
\end{proposition}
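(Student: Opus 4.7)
The plan is to dispatch the three claims in order, using Lemma~\ref{lemma:odiv-is-o} to reduce each gradient computation to a statement about an order term in $\mathcal{O}$, and then invoke Remark~\ref{rem:O-basis-property}, which asserts that the evaluation vectors $\{t(X) : t \in \mathcal{O}\}$ are linearly independent in $\mathbb{R}^{|X|}$. Linear independence of a finite family forces every member to be nonzero, so each $o \in \mathcal{O}$ has $o(X) \neq 0$; this is the workhorse fact I will use repeatedly.

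For part (1), fix $o \in \mathcal{O} \setminus \{1\}$. Since $o$ is a single term, $\gwnorm{o} = \gnorm{o}$, so it suffices to exhibit some $x \in X$ with $\nabla o(x) \neq 0$. By Lemma~\ref{lemma:odiv-is-o}(1) there exists an index $k$ such that $o' := \pdiv{o}{x_k}/\degreek{k}{o}$ belongs to $\mathcal{O}$. The workhorse fact above gives some $x \in X$ with $o'(x) \neq 0$, and consequently $\pdiv{o}{x_k}(x) = \degreek{k}{o}\, o'(x) \neq 0$, so $\nabla o(x) \neq 0$ and $\gnorm{o} > 0$.

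Part (2) is obtained by repeating the argument verbatim with Lemma~\ref{lemma:odiv-is-o}(2) in place of part (1): for $b \in \partial\mathcal{O}$ there is an index $k$ with $\pdiv{b}{x_k}/\degreek{k}{b} \in \mathcal{O}$, whose evaluation vector is nonzero, forcing $\nabla b(x) \neq 0$ at some $x \in X$ and hence $\gwnorm{b} = \gnorm{b} > 0$. For part (3), Definition~\ref{def:border-basis} writes every $g \in G$ as $g = b - \sum_{o \in \mathcal{O}} c_o o$ with $b \in \partial\mathcal{O}$. Expanding the gradient-weighted norm directly from its definition gives
\[
\gwnorm{g}^2 \;=\; \gwnorm{b}^2 \;+\; \sum_{o \in \mathcal{O}} c_o^2\, \gwnorm{o}^2 \;\ge\; \gwnorm{b}^2 \;>\; 0,
\]
where the final strict inequality is exactly part (2); note that the border term $b$ enters with coefficient $1$, so no control on the $c_o$ is needed.

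The only conceptually nontrivial step I foresee is the passage from ``$o' \in \mathcal{O}$'' to ``$o'(X) \neq 0$''; this is where the approximate-case caveat of Remark~\ref{rem:O-basis-property} is genuinely used, and it is also what forces us to exclude the constant term $1$ in part (1), since $\pdiv{1}{x_k}/\degreek{k}{1}$ is undefined. Everything else is a mechanical assembly of the derivative identity from Lemma~\ref{lemma:odiv-is-o} with the definitions of the gradient norm and gradient-weighted norm.
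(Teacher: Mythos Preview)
Your proof is correct and follows essentially the same route as the paper: invoke Lemma~\ref{lemma:odiv-is-o} to produce a partial derivative that is (up to scaling) an order term, use Remark~\ref{rem:O-basis-property} to conclude that order terms have nonzero evaluation vectors, and for part~(3) observe that the border term $b$ appears in $g$ with nonzero coefficient so that $\gwnorm{g}^2 \ge \gwnorm{b}^2 > 0$. Your write-up is in fact more explicit than the paper's in part~(3), where the paper simply notes that since $\mathrm{supp}(g) \subset \{b\}\cup\mathcal{O}$ and each non-constant term there has nonzero gradient-weighted norm, so does $g$; your inequality makes the same point quantitatively.
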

\begin{proof}
\textit{Proof of (1) and (2).}
From Lemma~\ref{lemma:odiv-is-o}, for any non-constant order term and border term, say $t \in \mathcal{O}\cup\partial\mathcal{O}\backslash\{1\}$, there exists a partial derivative with a term that is again an order term. Because order terms are nonvanishing for $X$ (cf. Remark~\ref{rem:O-basis-property}), the gradient-weighted norm of $t$ is nonzero. 

\textit{Proof of (3).} From Definition~\ref{def:border-basis}, the support of a border basis polynomial $g\in G$ is $\{b\} \cup \mathcal{O}$, where $b$ is a border term. As points (1) and (2), non-constant order terms and border terms have nonzero gradient-weighted norm (equivalently, nonzero gradient norm), and the gradient-weighted norm of $g$ is nonzero. 
\end{proof}
Proposition~\ref{prop:gcnorm-theorem} indicates that gradient-weighted normalization is always valid in the basis computation. 
Furthermore, gradient-weighted unitary polynomials have a bounded gradient norm.

\begin{proposition}\label{prop:gradient-norm-ub}
    For any gradient-weighted unitary polynomial $g\in\mathcal{R}_n$ for $X\subset\mathbb{R}^n$, it is $\norm{\nabla g(X)} \le \degree{g}\sqrt{|X|}$.
\end{proposition}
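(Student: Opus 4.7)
The plan is to expand $g=\sum_{i}c_{i}t_{i}$ along its support, pass to the per-term gradients $\nabla t_{i}$, and apply a weighted Cauchy--Schwarz so that the gradient-weighted unitary hypothesis $\sum_{i}c_{i}^{2}\gnorm{t_{i}}^{2}=1$ is absorbed exactly. Since constant terms contribute nothing to $\nabla g(X)$ or to $\gwnorm{g}$, I may assume every $t_{i}$ is non-constant, so that $\gnorm{t_{i}}>0$ and the weights below are well-defined (in the border-basis context of interest this is guaranteed by Proposition~\ref{prop:gcnorm-theorem}).

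Fixing $x\in X$ and writing $\nabla g(x)=\sum_{i}c_{i}\,\nabla t_{i}(x)$, I would bound $\norm{\nabla g(x)}^{2}$ by the triangle inequality followed by Cauchy--Schwarz with weights $\gnorm{t_{i}}$. Summing over $x\in X$ and swapping the order of summation, the coefficient factor becomes $\gwnorm{g}^{2}=1$, while for each $i$ the sum $\sum_{x\in X}\norm{\nabla t_{i}(x)}^{2}/\gnorm{t_{i}}^{2}$ collapses to $Z_{t_{i}}^{2}=\sum_{k}\degreek{k}{t_{i}}^{2}$ directly from the definition of the gradient norm. This yields the clean intermediate bound $\norm{\nabla g(X)}^{2}\le\sum_{i}Z_{t_{i}}^{2}$.

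To close, for each $t_{i}$ I would invoke the elementary per-term estimate $Z_{t_{i}}^{2}=\sum_{k}\degreek{k}{t_{i}}^{2}\le\bigl(\sum_{k}\degreek{k}{t_{i}}\bigr)^{2}=\degree{t_{i}}^{2}\le\degree{g}^{2}$, giving $\sum_{i}Z_{t_{i}}^{2}\le|\mathrm{supp}(g)|\cdot\degree{g}^{2}$. In the border-basis context the non-constant support lies in $(\mathcal{O}\setminus\{1\})\cup\partial\mathcal{O}$, whose cardinality is at most $|\mathcal{O}|\le|X|$ by Remark~\ref{rem:O-basis-property}, yielding the desired $\norm{\nabla g(X)}\le\degree{g}\sqrt{|X|}$.

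The main obstacle is precisely this final passage from $|\mathrm{supp}(g)|$ to $|X|$: the Cauchy--Schwarz chain by itself only yields $\norm{\nabla g(X)}\le\degree{g}\sqrt{|\mathrm{supp}(g)|}$, so bridging to the claimed $\sqrt{|X|}$ calls on the structural support bound from the ambient order ideal rather than any purely algebraic identity about $g$. An operator-norm or Frobenius-norm refinement of the first step does not seem to help, since $\sum_{i}Z_{t_{i}}^{2}$ already equals the squared Frobenius norm of the matrix whose columns are $\nabla t_{i}(X)/\gnorm{t_{i}}$, so the support-size input coming from the surrounding border-basis setup appears unavoidable.
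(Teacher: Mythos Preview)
Your proposal is correct and essentially mirrors the paper's own proof: both apply the triangle inequality followed by weighted Cauchy--Schwarz (with weights $\gnorm{t_i}$) so that the gradient-weighted unitary hypothesis is absorbed, arrive at a bound of the form $\degree{g}^{2}\cdot|\text{non-constant support}|$, and close via $|\mathrm{supp}(g)\setminus\{1\}|\le|\mathcal{O}|\le|X|$ from the ambient border-basis context. The only cosmetic difference is that the paper pulls the factor $Z_{t_i}\le\degree{g}$ out \emph{before} applying Cauchy--Schwarz (normalizing by $\|\nabla t_i(X)\|$ rather than $\gnorm{t_i}$), whereas you keep $Z_{t_i}$ until the end and bound $\sum_i Z_{t_i}^{2}$ directly; the resulting inequality chain is identical, and your observation that the $\sqrt{|X|}$ step genuinely requires the structural support bound---not just algebra on $g$---is exactly the situation in the paper as well.
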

\begin{proof}
    Let $g=\sum_{i=1}^s c_it_i, (c_i\in\mathbb{R}, t_i\in\mathcal{T}_n)$. In addition, we define an index set $\mathfrak{j}(g) = \{i \in \{1,\ldots, s\} \mid \gwnorm{t_i}\ne 0\}$. Then, 
    \begin{align}
        \nabla g(\mathbf{x}) 
        &= \sum_{i\in \mathfrak{j}(g)}c_i \gwnorm{t_i} \frac{\nabla t_i(\mathbf{x})}{\gwnorm{t_i}}, \\
        &= \sum_{i\in \mathfrak{j}(g)}c_i \gwnorm{t_i} \frac{\nabla t_i(\mathbf{x})}{\norm{\nabla t_i(X)}}\sqrt{\sum_{k=1}^n {\degreek{k}{t_i}^2}},
    \end{align}
    Using $\sqrt{\sum_k {\degreek{k}{t_i}^2}}\le \degree{t_i} \le \degree{g}$, the triangle inequality, and $\gwnorm{g} = \sqrt{\sum_{i=1}^s c_i^2 \gwnorm{t}^2} = 1$, 
    \begin{align}
        \|\nabla g(\mathbf{x})\|
        &\le \degree{g} \sum_{i\in \mathfrak{j}(g)}|c_i| \gwnorm{t_i} \frac{\norm{\nabla t_i(\mathbf{x})}}{\norm{\nabla t_i(X)}}, \\
        &\le \degree{g} \sqrt{\sum_{i\in \mathfrak{j}(g)}c_i^2 \gwnorm{t_i}^2} \sqrt{\sum_{i\in \mathfrak{j}(g)}\frac{\norm{\nabla t_i(\mathbf{x})}^2}{\norm{\nabla t_i(X)}^2}},\\
        &= \degree{g}\sqrt{\sum_{i\in \mathfrak{j}(g)}\frac{\norm{\nabla t_i(\mathbf{x})}^2}{\norm{\nabla t_i(X)}^2}},
    \end{align}
    where at the second inequality, we employed the Cauchy--Schwarz inequality. Thus,
    \begin{align}
        \norm{\nabla g(X)} 
        &= \sqrt{\sum_{\mathbf{x}\in X} \norm{\nabla g(\mathbf{x})}^2},\\
        &\le \degree{g}\sqrt{\sum_{\mathbf{x}\in X}\sum_{i\in \mathfrak{j}(g)}\frac{\norm{\nabla t_i(\mathbf{x})}^2}{\norm{\nabla t_i(X)}^2}}, \\
        &= \degree{g} \sqrt{\sum_{i\in \mathfrak{j}(g)} 1}, \\
        &\le \degree{g}\sqrt{|X|}.
    \end{align}
    At the last inequality, we used $\abs{\mathrm{supp}(g)\backslash \{1\}}\le\abs{\mathcal{O}}\le \abs{X}$.
\end{proof}

\begin{remark}
If we use $Z = 1$ in Definition~\ref{def:gradient-norm}, the inequality in Proposition~\ref{prop:gradient-norm-ub} $\norm{\nabla g(X)} \le \sqrt{|X|}$, which makes the bound degree-independent.
\end{remark}

Proposition~\ref{prop:gradient-norm-ub} implies that, for small perturbation $\mathbf{p}$ on $\mathbf{x}$, the two evaluation values $g(\mathbf{x})$ and $g(\mathbf{x}+\mathbf{p})$ are close to each other and the difference can be bounded by the constant scaling of the magnitude of the perturbation. Later, this will be confirmed by Proposition~\ref{prop:perturbation}. However, this is not the case with coefficient normalization because the (coefficient-)unitary polynomial does not necessarily indicate a small gradient (cf.~Example~\ref{example:large-gcnorm}). Therefore, an approximately vanishing polynomial $g$ for a perturbed point $\mathbf{x}$ can be overfitting to it, and $g$ may not be well approximately vanishing for the unperturbed point $\mathbf{x}^{*} = \mathbf{x} - \mathbf{p}$, where $\mathbf{p} := \mathbf{x} - \mathbf{x}^{*}$. 

\section{Approximate computation of border bases with gradient-weighted normalization}
We will now present a method to introduce gradient-weighted normalization into the existing approximate border basis constructions (particularly, the approximate vanishing ideal~(AVI)-family methods). Almost all the AVI-family methods rely on solving eigenvalue problems (or SVD). 
Gradient-weighted normalization can be introduced by simply replacing these problems by generalized eigenvalue problems. Other methods that do not rely on eigenvalue problems solve simple quadratic programs (e.g., least-squares problems). Therefore, we consider that the proposed method can be integrated with these methods as well, owing to its simplicity.

To avoid an unnecessary abstract discussion, we adopted the ABM algorithm~\cite{limbeck2013computation} as an example because it is simple and offers various advantages over the AVI algorithm. 

\subsection{The ABM algorithm with gradient-weighted normalization}\label{sec:AMB-with-gwn}
\begin{algorithm}[t]
\DontPrintSemicolon
  \KwInput{$X, \epsilon, \sigma$}
  \KwOutput{$G, \mathcal{O}$}
  $G = \{\}, \mathcal{O} = \{1\}$\\
  
  \For{$d=1,2,\ldots $}{
  $L= \qty{ b\in \mathcal{\partial\mathcal{O}} \mid \degree{b} = d}$
  \tcp*{Assuming the terms are in the ascending order w.r.t. $\sigma$}
  \If{$|L| = 0$}{
  Return $G, \mathcal{O}$ and terminate.
  } 
  \For{$b$ in $L$}{
  Solve the generalized eigenvalue problem~Eq.~\eqref{eq:gep} and obtain $(\lambda_{\min}, \mathbf{v}_{\min})$.
      \If{$\sqrt{\lambda} \le \epsilon$}{
      \tcc{$\mathcal{O}=\{o_1,o_2,\cdots, o_s\}$ and  $\mathbf{v}_{\min}=(v_1,\ldots,v_{s+1})^{\top}$}
        $g := v_1b + v_2o_1 + \cdots +v_{s+1}o_{s}$\\
        $G = G \cup \{g\}$
      }
      \Else{
        $\mathcal{O} = \mathcal{O}\cup \{b\}$
      }
  }
 }

\caption{\\ The ABM algorithm with gradient-weighted normalization}\label{alg:ABMGN}
\end{algorithm}
Given a finite set of points $X\subset\mathbb{R}^n$, an error tolerance $\epsilon\ge 0$, and a degree-compatible term ordering $\sigma$, the ABM algorithm collects the order terms and approximately vanishing polynomials from lower to higher degrees. 
At degree 0, $\mathcal{O} = \{1\}$ and $G=\{\}$ are prepared. At degree $d\ge 1$, the degree-$d$ terms are prepared as $L = \{b \in \partial\mathcal{O} \mid \degree{b} = d\}$. If $L$ is empty, the algorithm outputs $(\mathcal{O},G)$ and terminates; otherwise, the following steps \textbf{S1--S3} are repeated until $L$ becomes empty. 
\begin{itemize}
    \item[\textbf{S1}] Select the smallest\footnote{In the original paper~\cite{limbeck2013computation}, the largest term is selected. We consider the smallest term should be selected first because the term $b$ is a potential leading term (or border term), and thus must always be larger than the terms in the tentative $\mathcal{O}$.} $b \in L$ with respect to $\sigma$ and remove $b$ from $L$. 
    
    \item[\textbf{S2}] Let $M, D$ be $M = \mqty(b(X) & O(X))$ and 
    \begin{align}
        D &= \mathrm{diag}\qty(\gwnorm{b}, \gwnorm{o_1},\ldots, \gwnorm{o_s}),
    \end{align}
    where $\mathcal{O} = \{o_1,\ldots, o_s\}$, and $\mathrm{diag}(\cdots)$ denotes a diagonal matrix with the given entries in its diagonal.
    Solve the following generalized eigenvalue problem:
    \begin{align}\label{eq:gep}
        M^{\top}M\mathbf{v}_{\min} &= \lambda_{\min}D^2\mathbf{v}_{\min},
    \end{align}
     where $\lambda_{\min}$ and $\mathbf{v}_{\min}$ are the smallest generalized eigenvalue and the corresponding generalized eigenvector, respectively.
     
    \item[\textbf{S3}] If $\sqrt{\lambda_{\min}} \le \epsilon$, 
    Define a new polynomial, 
    \begin{align}\label{eq:vanishing-polynomial}
        g = v_1b + v_2o_1 + v_3o_2 + \cdots + v_{s+1}o_s,
    \end{align}
    where $\mathbf{v}_{\min} = (v_1,\ldots,v_{s+1})^{\top}$ and $G$ is updated by $G = G \cup \{g\}$. Otherwise,
    update $\mathcal{O}$ by $\mathcal{O} = \mathcal{O} \cup \{b\}$. 
\end{itemize}
Once $L$ becomes empty, we proceed to the next degree $d+1$ and construct a new $L$. If the new $L$ is empty, the algorithm outputs $(\mathcal{O},G)$ and terminates.  

\begin{remark}\label{rem:diff-from-original-ABM}
The only difference from the original ABM algorithm is that a generalized eigenvalue problem of $(M^{\top}M, D^2)$ instead of the SVD of $M$ (equivalently, an eigenvalue problem of $M^{\top}M$) is solved. If $D$ is set to an identity matrix, the algorithm is reduced to the original one. Because of this minor difference, most of the analysis (including the termination) on the original ABM algorithm remains valid. The order of magnitude of time complexity of algorithms does not change either. 
\end{remark}

\begin{proposition}\label{prop:abmgn}
The following always holds true during the process of the ABM algorithm with gradient-weighted normalization.
\begin{enumerate}
    \item Any gradient-weighted unitary polynomial $h \in \Span{\mathcal{O}}$ is not $\epsilon$-approximately vanishing for $X$.
    \item In Eq.~\eqref{eq:vanishing-polynomial}, $g$ is gradient-weighted unitary with a nonzero coefficient on $b$ and $\sqrt{\lambda_{\min}}$-approximately vanishing for $X$. 
\end{enumerate}
\end{proposition}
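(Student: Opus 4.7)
The plan is to re-express the generalized eigenvalue problem~\eqref{eq:gep} as a variational principle for the Rayleigh quotient $\|h(X)\|^2/\gwnorm{h}^2$, then obtain claim~(1) by induction on iterations of the outer loop of Algorithm~\ref{alg:ABMGN} and deduce claim~(2) directly from the variational identity together with claim~(1).

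First I would rewrite the two quadratic forms in~\eqref{eq:gep} in polynomial language. For any coefficient vector $\mathbf{w} = (w_0, w_1, \ldots, w_s)^{\top}$ with associated polynomial $h = w_0 b + \sum_{i=1}^{s} w_i o_i$, we have $M\mathbf{w} = h(X)$, so $\mathbf{w}^{\top} M^{\top} M \mathbf{w} = \|h(X)\|^2$, while $\mathbf{w}^{\top} D^2 \mathbf{w} = \sum_{i=0}^{s} w_i^2 \gwnorm{t_i}^2 = \gwnorm{h}^2$ (writing $t_0 = b$ and $t_i = o_i$). Hence the smallest finite generalized eigenvalue admits the characterization
\begin{align}
\lambda_{\min} = \min\left\{ \|h(X)\|^2/\gwnorm{h}^2 \;:\; h\in\Span{\{b\}\cup\mathcal{O}},\ \gwnorm{h}>0 \right\},
\end{align}
attained at $h = g$ when $\mathbf{v}_{\min}$ is rescaled so that $\mathbf{v}_{\min}^{\top} D^2 \mathbf{v}_{\min} = 1$. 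The feasible set is non-empty because $\gwnorm{b}>0$ by Proposition~\ref{prop:gcnorm-theorem}.

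For claim~(1), I would induct on the iteration of the outer loop. The base case $\mathcal{O}=\{1\}$ is vacuous because every $h\in\Span{\{1\}}$ is constant, whence $\gwnorm{h}=0$. For the inductive step, the only change to $\mathcal{O}$ occurs when some $b$ is appended, which happens exactly when $\sqrt{\lambda_{\min}}>\epsilon$. For any gradient-weighted unitary $h\in\Span{\mathcal{O}\cup\{b\}}$, the variational identity yields $\|h(X)\|^2 \ge \lambda_{\min}\gwnorm{h}^2 = \lambda_{\min} > \epsilon^2$, as required.

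For claim~(2), rescaling $\mathbf{v}_{\min}$ so that $\mathbf{v}_{\min}^{\top} D^2 \mathbf{v}_{\min} = 1$ immediately gives $\gwnorm{g}=1$ and $\|g(X)\|^2 = \lambda_{\min}$, so $g$ is gradient-weighted unitary and $\sqrt{\lambda_{\min}}$-approximately vanishing. If the coefficient $v_1$ on $b$ were zero, then $g\in\Span{\mathcal{O}}$ would be a gradient-weighted unitary polynomial that is $\sqrt{\lambda_{\min}}\le\epsilon$-approximately vanishing, contradicting claim~(1) applied to the current $\mathcal{O}$. The main subtlety is the singularity of $D^2$ induced by $\gwnorm{1}=0$: the variational principle must be restricted to $h$ with $\gwnorm{h}>0$ (equivalently, to $\mathbf{w}$ outside the kernel of $D^2$), but this is harmless since the minimizer $\mathbf{v}_{\min}$ corresponds to a finite generalized eigenvalue and hence lies outside $\ker D^2$.
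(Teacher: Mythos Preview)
Your proposal is correct and follows essentially the same approach as the paper: both reinterpret~\eqref{eq:gep} variationally as minimizing $\|h(X)\|^2$ subject to $\gwnorm{h}=1$, use the fact that a term $b$ is adjoined to $\mathcal{O}$ precisely when $\sqrt{\lambda_{\min}}>\epsilon$ to establish~(1), and then derive the nonvanishing of $v_1$ in~(2) by contradiction with~(1). Your treatment is in fact slightly more careful than the paper's, since you explicitly address the vacuity of the base case $\mathcal{O}=\{1\}$ and the singularity of $D^2$ arising from $\gwnorm{1}=0$, points the paper glosses over.
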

\begin{proof}
\textit{Proof of (1).}
A gradient-weighted unitary polynomial in $\Span{\mathcal{O}}$ with minimal extent of vanishing can be obtained by solving a generalized eigenvalue problem $(O(X)^{\top}O(X), D^{2})$, where $D$ is a diagonal matrix with the gradient-weighted norm of the terms in $\mathcal{O}$ as the diagonal entries. However, by construction, the square root of its smallest generalized eigenvalue $\lambda_{\min}$ is larger than $\epsilon$. More specifically, during the process of the algorithm, with a tentative border term $b$ and tentative order ideal $\widetilde{\mathcal{O}}$ such that $\mathcal{O}=\{b\} \cup \widetilde{\mathcal{O}}$, the generalized eigenvalue problem of $(O(X)^{\top}O(X), D^{2})$ was already solved using Eq.~(\ref{eq:gep}). Subsequently, $\widetilde{\mathcal{O}}$ was extended to $\mathcal{O}$ because of $\sqrt{\lambda_{\min}} > \epsilon$. 

\textit{Proof of (2).} We prove the claim by induction. At the initialization of the algorithm, the claim holds true. Assume that the claim holds till a certain point in the process of the ABM algorithm, and we have $\mathcal{O},G, b$ at \textbf{S1}. 
By solving Eq.~\eqref{eq:gep} at \textbf{S2}, we obtain the coefficient vector $\mathbf{v}_{\min}$ and $g = bv_1 + v_2o_1 + v_3o_2 + \cdots + v_{s+1}o_s$.
Note that solving Eq.~\eqref{eq:gep} minimizes $\norm{g(X)}^2=\mathbf{v}_{\min}^{\top}M^{\top}M\mathbf{v}_{\min}=\lambda_{\min}$ with the constraint $\gwnorm{g}^2=\mathbf{v}_{\min}^{\top}D^2\mathbf{v}_{\min}=1$. Thus, $g$ is gradient-weighted unitary and $\sqrt{\lambda_{\min}}$-approximately vanishing. From (1), no gradient-weighted unitary polynomial with support $\mathcal{O}$ is $\epsilon$-approximately vanishing. Thus, the leading coefficient $b$ in $g$ is nonzero. 
\end{proof}

Conceptually, gradient-weighted normalization normalizes a polynomial $h = \sum_{i}v_it_i$ as $h = \sum_{i}v_i \frac{t_i}{\|\nabla t_i(X)\|}$,
where $v_i\in\mathbb{R}, t_i\in\mathcal{T}_n$. Note that this presentation is inaccurate because $\|\nabla 1(X)\| = 0$, but it provides an intuition of gradient-weighted normalization. Namely, around any point $\mathbf{x}\in X$, each term $t_i(\mathbf{x}) / \|\nabla t_i(X)\|$ behaves like a linear function because the ``degree'' of the denominator is roughly $\degree{t} - 1$ (more intuitively, if $t_i$ is a univariate, $\mathrm{d}t_i(\mathbf{x})/\mathrm{d}x$ is linear). Therefore, the polynomial $h$ behaves like a linear function around the points of $X$. This intuition motivated the several important analyses in this study, including Proposition~\ref{prop:perturbation} and Theorem~\ref{thm:scaling-consistency}.

The following theorem argues that the ABM algorithm with gradient-weighted normalization can benefit from the properties that are nearly identical to those of the original ABM algorithm~(Theorem~4.3.1 in~\cite{limbeck2013computation}). 

\begin{theorem}\label{thm:abmgn}
Given a finite set of points $X\subset\mathbb{R}^n$, $\epsilon\ge 0$, and a degree-compatible term ordering $\sigma$, the ABM algorithm with gradient-weighted normalization~(Algorithm~\ref{alg:ABMGN}) computes $G\subset\mathcal{R}_n$ and $\mathcal{O}\subset\mathcal{T}_n$, which have the following properties:
\begin{enumerate}
\item All the polynomials in $G$ are gradient-weighted unitary and $G$ generates an $\epsilon$-approximately vanishing ideal of $X$. 
\item No gradient-weighted unitary polynomial that vanishes $\epsilon$-approximately on $X$ exists in $\Span{\mathcal{O}}$. 
\item If $\mathcal{O}$ is an order ideal of terms, then the set $\widetilde{G} = \{1/\mathrm{LC}_{\sigma}(g) g \mid g \in G\}$ is an $\mathcal{O}$-border prebasis, where $\mathrm{LC}_{\sigma}(\cdot)$ is the leading coefficient of a polynomial in the ordering $\sigma$.

\item If $\epsilon = 0$, the algorithm produces the same results as the Buchberger--M\"oller algorithm for border bases with gradient-weighted normalization. 
\end{enumerate}
\end{theorem}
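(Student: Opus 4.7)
The plan is to verify each of the four claims in turn, noting that claims (1) and (2) follow almost immediately from Proposition~\ref{prop:abmgn} together with the acceptance test $\sqrt{\lambda_{\min}}\le\epsilon$ used in step~\textbf{S3}. For claim (1), Proposition~\ref{prop:abmgn}(2) gives that each $g$ appended to $G$ is gradient-weighted unitary with $\|g(X)\|=\sqrt{\lambda_{\min}}\le\epsilon$; hence $\langle G\rangle$ is generated by gradient-weighted unitary, $\epsilon$-approximately vanishing polynomials. Claim (2) is a direct restatement of Proposition~\ref{prop:abmgn}(1) applied after termination: a term $b$ enters $\mathcal{O}$ only when the generalized eigenvalue problem on $(\{b\}\cup\widetilde{\mathcal{O}})$ returns $\sqrt{\lambda_{\min}}>\epsilon$, certifying that no gradient-weighted unitary element of $\Span{\mathcal{O}}$ is $\epsilon$-approximately vanishing at any stage, and in particular at the end.

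The bulk of the work is claim (3). Assuming $\mathcal{O}$ is an order ideal, I would first rescale each $g\in G$ by $\mathrm{LC}_\sigma(g)$. Because $\sigma$ is degree-compatible and step~\textbf{S1} processes the terms of $L$ in ascending $\sigma$-order, the $b$ chosen at \textbf{S1} is strictly $\sigma$-larger than every term in the tentative order ideal $\widetilde{\mathcal{O}}$ present at that moment; combined with the nonvanishing of $b$'s coefficient in $g$ (Proposition~\ref{prop:abmgn}(2)), this shows that $b$ is the leading term of $g$ under $\sigma$. Dividing therefore yields $\tilde g=b-\sum_{o\in\mathcal{O}}c_o o$, with support in $\{b\}\cup\mathcal{O}$ since $\widetilde{\mathcal{O}}\subseteq\mathcal{O}$. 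To close the claim I must check that $b\in\partial\mathcal{O}$ for the final $\mathcal{O}$: it lies in $\bigcup_k x_k\mathcal{O}$ because of its original membership in $L$, and once it enters the if-branch it is never later added to $\mathcal{O}$, so $b\notin\mathcal{O}$. The subtle point — which I expect to be the main obstacle — is confirming that every $b\in\partial\mathcal{O}$ is actually handled exactly once by the algorithm, so that each border term really produces a polynomial of the required shape. I would argue this by the key observation that adding a degree-$d$ term to $\widetilde{\mathcal{O}}$ can only create degree-$(d+1)$ border terms; hence the list $L$ built at the start of the degree-$d$ outer iteration already contains every degree-$d$ border term of the final $\mathcal{O}$, and the inner loop exhausts it before moving on.

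For claim (4), setting $\epsilon=0$ collapses the acceptance test to $\lambda_{\min}=0$, and the generalized eigenvalue problem $(M^\top M,D^2)$ has $\lambda_{\min}=0$ exactly when $M\mathbf{v}=0$ admits a nontrivial solution — a condition that does not depend on the positive-definite weighting $D^2$. Consequently, at every step the algorithm decides between the if- and else-branches in the same way as the exact Buchberger--M\"oller algorithm adapted with gradient-weighted normalization, and produces the same $\mathcal{O}$ and the same $G$ (up to the normalization fixed by $D$).
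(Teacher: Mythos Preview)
Your proposal is correct and follows essentially the same approach as the paper: claims (1) and (2) are reduced to Proposition~\ref{prop:abmgn}, claim (3) uses the nonzero leading coefficient from Proposition~\ref{prop:abmgn}(2) together with the construction of Algorithm~\ref{alg:ABMGN}, and claim (4) is argued by reducing the $\epsilon=0$ case to the exact Buchberger--M\"oller step. The only difference is granularity: where the paper dispatches (3) and (4) by pointing to the proof of the original ABM algorithm in~\cite{limbeck2013computation} and to Remark~\ref{rem:diff-from-original-ABM}, you spell out the supporting details (why the border term is the $\sigma$-leading term, why every degree-$d$ border term of the final $\mathcal{O}$ already appears in $L$, and why positive-definiteness of $D^2$ makes the $\lambda_{\min}=0$ test normalization-independent).
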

\begin{proof}[Proof of Theorem~\ref{thm:abmgn}]
Both (1) and (2) has been proved in Proposition~\ref{prop:abmgn}.
By Proposition~\ref{prop:abmgn}, the coefficient of the leading term (border term) of polynomials in $G$ is nonzero; thus, by construction in Algorithm~\ref{alg:ABMGN}, (3) holds~(refer to the proof of the original ABM algorithm;~\cite{limbeck2013computation})
As for (4), the proof follows from the original proof of the ABM algorithm by simply replacing the eigenvalue problem with the generalized eigenvalue problem~(cf. Remark~\ref{rem:diff-from-original-ABM}). 
\end{proof}

The main difference between this theorem and the original one with coefficient normalization is as follows: in Theorem~\ref{thm:abmgn}, (i) the unitarity of polynomials is based on the gradient-weighted norm instead of the coefficient norm; and 
(ii) it lacks the claim that $\widetilde{G}$ is a \textit{$\delta$-approximate border basis} in terms of the gradient-weighted norm and \textit{$\eta$-approximate border basis} in terms of the coefficient norm, where $\delta, \eta$ are some constants. We have eliminated the claim because the proof is lengthy for the limitation of pages. Instead, we here demonstrate that the approximation of a border basis can be discussed with the coefficient norm, even when the basis is computed with the gradient-weighted norm. This will be a key lemma to prove the claim mentioned above. 

As illustrated in Examples~\ref{example:zero-gcnorm} and~\ref{example:large-gcnorm}, the gradient-weighted and coefficient norms are not generally correlated.
However, we can show that the gradient-weighted norm can impose an upper bound on the coefficient norm in approximate border basis computation. We first prove a lemma. 

\begin{lemma}\label{lemma:gradient-norm-lb}
Let $\mathcal{O}\subset\mathcal{T}_n$ be an order ideal, which is obtained by the ABM algorithm with gradient-weighted normalization for $X\subset\mathbb{R}^n$ and $\epsilon \ge 0$. Then, for any $o\in\mathcal{O}\backslash\{1\}$, it holds $\gwnorm{o} > \epsilon^{\mathrm{deg}(o)-1}\sqrt{|X|}$.
\end{lemma}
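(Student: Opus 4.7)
The plan is induction on $d := \degree{o}$, combined with an auxiliary claim extracted from Proposition~\ref{prop:abmgn}(1). The auxiliary claim is that for every $o' \in \mathcal{O}\backslash\{1\}$ one has $\|o'(X)\| > \epsilon\,\gwnorm{o'}$. This follows because when $o'$ was tested as a candidate border term in Algorithm~\ref{alg:ABMGN}, the generalized eigenvalue problem~\eqref{eq:gep} returned $\sqrt{\lambda_{\min}} > \epsilon$; the Rayleigh-quotient characterization identifies $\lambda_{\min}$ with the minimum of $\|h(X)\|^2/\gwnorm{h}^2$ over the relevant span, so evaluating at the single-term polynomial $h=o'$ immediately yields $\|o'(X)\|^2 \ge \lambda_{\min}\gwnorm{o'}^2 > \epsilon^2\gwnorm{o'}^2$.

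For the base case $d=1$, any $o\in\mathcal{O}\backslash\{1\}$ of degree one is some indeterminate $x_k$, and direct evaluation of Definition~\ref{def:gradient-norm} gives $\gwnorm{x_k}=\sqrt{|X|}$, matching the right-hand side $\epsilon^{0}\sqrt{|X|}$. For the inductive step $d \ge 2$, write $o = \prod_l x_l^{\alpha_l}$ with $\sum_l \alpha_l = d$, and for each $k$ with $\alpha_k > 0$ set $o_k' := x_k^{\alpha_k - 1}\prod_{l\ne k}x_l^{\alpha_l}$. Lemma~\ref{lemma:odiv-is-o}(1) places every such $o_k'$ in $\mathcal{O}$, and $\degree{o_k'}=d-1\ge 1$ forces $o_k' \ne 1$. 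Chaining the auxiliary claim with the inductive hypothesis gives
\[
\|o_k'(X)\| \;>\; \epsilon\,\gwnorm{o_k'} \;\ge\; \epsilon\cdot\epsilon^{d-2}\sqrt{|X|} \;=\; \epsilon^{d-1}\sqrt{|X|}.
\]
The remaining ingredient is the derivative identity $\partial o/\partial x_k = \alpha_k\,o_k'$, which combined with Definition~\ref{def:gradient-norm} yields
\[
\gwnorm{o}^2 \;=\; \frac{\sum_k \alpha_k^2 \|o_k'(X)\|^2}{\sum_k \alpha_k^2},
\]
exhibiting $\gwnorm{o}^2$ as a convex combination of the $\|o_k'(X)\|^2$. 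Hence $\gwnorm{o}^2 \ge \min_{k:\alpha_k>0}\|o_k'(X)\|^2 > \epsilon^{2(d-1)}|X|$, closing the induction.

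The main work lies in identifying the convex-combination identity for $\gwnorm{o}^2$ in terms of the ``parent'' order terms $o_k'$, and in checking via Lemma~\ref{lemma:odiv-is-o} that these parents remain inside $\mathcal{O}$; once these are in hand, the auxiliary claim contributes a factor of $\epsilon$ per degree drop. The one delicate point is that in the base case the inequality actually attains \emph{equality} $\gwnorm{x_k}=\sqrt{|X|}$ rather than the strict ``$>$'' appearing in the lemma. This is harmless for the induction---strictness is reinstated as soon as $d\ge 2$, since then $o_k'\ne 1$ and the auxiliary claim supplies a strict $\epsilon$-factor---but it does suggest the lemma is better read as $\gwnorm{o}\ge \epsilon^{\degree{o}-1}\sqrt{|X|}$, with strict inequality precisely when $\degree{o}\ge 2$.
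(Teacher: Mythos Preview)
Your proof is correct and follows essentially the same approach as the paper's: induction on degree, expressing $\gwnorm{o}^2$ as a weighted average of the $\|o_k'(X)\|^2$ via the identity $\partial o/\partial x_k = \alpha_k o_k'$, and combining the inductive hypothesis with the auxiliary bound $\|o'(X)\| > \epsilon\,\gwnorm{o'}$ drawn from Proposition~\ref{prop:abmgn}(1). Your observation about the base case is on target---the paper's own proof has the same equality $\gwnorm{x_k}=\sqrt{|X|}$ at $d=1$ and simply glosses over it, so the stated strict inequality should indeed be read as $\ge$ there, with strictness recovered for $\degree{o}\ge 2$ exactly as you note.
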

\begin{proof}
We prove the claim by induction. At $d=1$, the claim holds because of $\|\nabla o(X)\|= \sqrt{|X|}$. Next, assume at degree $d \ge 1$, the first claim holds; that is, for any $o\in \mathcal{O}$ of degree $d$, it is $\gwnorm{o} > \epsilon^{d-1}\sqrt{|X|}$. 
Let $\mathfrak{i}(o)\subset \{1,\ldots ,n\}$ be the index set such that $x_k$ can divide $o$ for $k\in \mathfrak{i}(o)$. For each $o \in \mathcal{O}$ and $k\in \mathfrak{i}(o)$, let $o^{(k)}$ be the term such that $o = x_ko^{(k)}$. For any $o \in \mathcal{O}$ of degree $d+1$, we obtain 
\begin{align}
	\gwnorm{o}^2
    &= \frac{1}{\sum_{k=1}^n \mathrm{deg}_k(o)^2}\sum_{k\in \mathfrak{i}(o)} \norm{\frac{\partial o}{\partial x_k}(X)}^2, \\
    &= \frac{1}{\sum_{k=1}^n \mathrm{deg}_k(o)^2}\sum_{k\in \mathfrak{i}(o)} \mathrm{deg}_k(o)^2 \norm{o^{(k)}(X)}^2, \\
    &= \frac{1}{\sum_{k=1}^n \mathrm{deg}_k(o)^2}\sum_{k\in \mathfrak{i}(o)} \mathrm{deg}_k(o)^2 \frac{\norm{o^{(k)}(X)}^2}
    {\gwnorm{o^{(k)}}^2} \gwnorm{o^{(k)}}^2, \\
    &> \epsilon^{2d} |X|. 
\end{align}
For the first equality, we used  $\sum_{k=1}^n \norm{\frac{\partial o}{\partial x_k}(X)}^2 = \sum_{k\in \mathfrak{i}(o)} \norm{\frac{\partial o}{\partial x_k}(X)}^2$
because $\pdiv{o}{x_k} = 0$ for $k\notin \mathfrak{i}(o)$. For the second equality, we used $\pdiv{o}{x_k} = \mathrm{deg}_k(o)o^{(k)}$. For the last inequality, we used $\sum_{k=1}^n\degreek{k}{o}^2 = \sum_{k\in \mathfrak{i}(o)}\degreek{k}{o}^2$, $\norm{o^{(k)}(X)}/\gwnorm{o^{(k)}} > \epsilon$, and the assumption at degree $d$. Thus, for any $o\in\mathcal{O} \backslash\{1\}$, it follows that $\norm{\nabla o(X)} > \epsilon^{\mathrm{deg}(o)-1} \sqrt{|X|}$. 
\end{proof}

Now, we upper bound the coefficient norm of an approximate vanishing polynomial by its gradient-weighted norm. 

\begin{proposition}\label{prop:coef-norm-ub-by-gwnorm} %
Let $(\mathcal{O}, G)\subset \mathcal{T}_n\times \mathcal{R}_n$ be the output of the ABM algorithm for $X\subset \mathbb{R}^n$ and $\epsilon \ge 0$. For any $g\in G$, the following holds: 
\begin{align}
    \cnorm{g}
    &< \frac{\sqrt{\gwnorm{g}^2+c_0^2}}{\min\{\epsilon^{\degree{g}-1}, 1\}\sqrt{|X|}},
\end{align}
where $c_0$ is the coefficient of the constant term of $g$.
Furthermore, if $\mathbf{0}\in X$, then
\begin{align}
    \cnorm{g}
    &< \frac{\sqrt{\gwnorm{g}^2+\epsilon^2}}{\min\{\epsilon^{\degree{g}-1}, 1\}\sqrt{|X|}}.
\end{align}
\end{proposition}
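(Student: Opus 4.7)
The plan is to decompose $\cnorm{g}^2$ into the constant-term contribution $c_0^2$ plus the sum of squared coefficients on the non-constant terms, control the latter by a uniform lower bound on the gradient-weighted norms of those terms, and for the ``furthermore'' statement exploit the $\epsilon$-approximate vanishing of $g$ at the origin.

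First I would write $g = c_0 + \sum_{t\in\operatorname{supp}(g)\setminus\{1\}} v_t\, t$, so that $\cnorm{g}^2 = c_0^2 + \sum_{t\ne 1} v_t^2$, while $\gwnorm{g}^2 = \sum_{t\ne 1} v_t^2 \gwnorm{t}^2$ since $\gwnorm{1}=0$. Setting $M := \min\{\epsilon^{2(\deg g - 1)},1\}|X|$, the goal is to establish $\gwnorm{t}^2 \ge M$ for every non-constant $t\in\operatorname{supp}(g)$, because this then yields $\sum_{t\ne 1} v_t^2 \le \gwnorm{g}^2/M$ and hence the desired bound on $\cnorm{g}$ after combining with the $c_0^2$ term. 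The support of $g$ consists of the border term $b$ leading $g$ and some order terms $o_i\in\mathcal{O}$, all of degree at most $\deg g$. For each non-constant $o_i$, Lemma \ref{lemma:gradient-norm-lb} immediately gives $\gwnorm{o_i}^2 > \epsilon^{2(\deg o_i -1)}|X| \ge M$, regardless of whether $\epsilon\le 1$ (so the minimum is $\epsilon^{2(\deg g -1)}$ and $\deg o_i \le \deg g$) or $\epsilon > 1$ (so the minimum is $1$ and $\deg o_i \ge 1$).

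For the border term $b$, I would replay the inductive step behind Lemma \ref{lemma:gradient-norm-lb}: expand
$\gwnorm{b}^2 = \tfrac{1}{\sum_l \deg_l(b)^2}\sum_{k\in\mathfrak{i}(b)} \deg_k(b)^2 \|b^{(k)}(X)\|^2$ with $b^{(k)} = b/x_k$, select the index $k$ guaranteed by Lemma \ref{lemma:odiv-is-o}(2) so that $b^{(k)}\in\mathcal{O}$, and then either use the base case $b^{(k)}=1$ (so $\deg b = 1$ and $\|b^{(k)}(X)\|^2 = |X|$) or combine Lemma \ref{lemma:gradient-norm-lb} with Proposition \ref{prop:abmgn}(1) applied to the gradient-weighted unitary $b^{(k)}/\gwnorm{b^{(k)}}$ to conclude $\|b^{(k)}(X)\| > \epsilon\cdot\epsilon^{\deg b^{(k)}-1}\sqrt{|X|} = \epsilon^{\deg b - 1}\sqrt{|X|}$. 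Substituting back yields $\gwnorm{b}^2 \ge M$, completing the uniform lower bound and hence the first inequality. For the ``furthermore'' part, since $\mathbf{0}\in X$ and $g$ is at most $\epsilon$-approximately vanishing on $X$ by Proposition \ref{prop:abmgn}(2), we have $|c_0| = |g(\mathbf{0})| \le \|g(X)\| \le \epsilon$, so substituting $c_0^2 \le \epsilon^2$ into the first inequality delivers the second.

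The hard part is the lower bound $\gwnorm{b}^2 \ge M$: unlike the order-term case of Lemma \ref{lemma:gradient-norm-lb}, not every quotient $b/x_k$ lies in $\mathcal{O}$, so the clean averaging identity $\sum_{k\in\mathfrak{i}(t)} \deg_k(t)^2 = \sum_l \deg_l(t)^2$ is not available and one must justify that restricting to the indices $k$ with $b^{(k)}\in\mathcal{O}$ still delivers the bound. Additional care is required to track the factor $M$ versus $1$ depending on whether $\epsilon<1$, and to collect $c_0^2$ together with $\gwnorm{g}^2/M$ into the exact form stated in the proposition.
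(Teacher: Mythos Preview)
Your overall approach coincides with the paper's: write $g=\sum_i c_i t_i$ with $t_0=1$, observe $\gwnorm{g}^2=\sum_{i\ge 1}c_i^2\gwnorm{t_i}^2\ge\bigl(\min_{i\ge 1}\gwnorm{t_i}^2\bigr)(\cnorm{g}^2-c_0^2)$, bound the minimum from below via Lemma~\ref{lemma:gradient-norm-lb}, and for the second inequality use $|c_0|=|g(\mathbf{0})|\le\|g(X)\|\le\epsilon$ when $\mathbf{0}\in X$. There is no genuinely different route.

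Where you and the paper diverge is in how the border term $b$ is handled. The paper simply cites Lemma~\ref{lemma:gradient-norm-lb} for \emph{every} non-constant $t_i\in\mathrm{supp}(g)$, making no distinction between order terms and the border term; it does not carry out the separate analysis you propose. So the difficulty you flag in your last paragraph---that restricting to a single index $k$ with $b^{(k)}\in\mathcal{O}$ loses the averaging identity and only yields $\gwnorm{b}^2\ge\frac{\deg_k(b)^2}{\sum_l\deg_l(b)^2}\,\epsilon^{2(\deg b-1)}|X|$ rather than the full $M$---is not something the paper resolves; it simply does not raise the issue. Your instinct that this step is delicate is correct, but you will not find a sharper argument for it in the paper's proof.

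Similarly, the final ``collect $c_0^2$ into the numerator'' step (passing from $\cnorm{g}^2< c_0^2+\gwnorm{g}^2/M$ to $\cnorm{g}^2<(\gwnorm{g}^2+c_0^2)/M$) is left implicit in the paper as well; your caution here is again more explicit than the original.
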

\begin{proof}
    Let $g = \sum_{i=0}^{s}c_i t_i$, where $c_i \in \mathbb{R}$ and $t_0 = 1$. Let $\mathbf{c} = \qty(c_1, \ldots, c_{s})^{\top}$ and 
        $D = \mathrm{diag}(\gwnorm{t_1}, \gwnorm{t_2}, \ldots, \gwnorm{t_{s}})$ (note that $c_{0}$ and $\gwnorm{t_0}$ are excluded).
    Then, 
    \begin{align}
        \gwnorm{g}^2 
        &= \mathbf{c}^{\top}D^2\mathbf{c}, \\
        &\ge \min_{i\in\{1,\ldots, s\}} \gwnorm{t_i}^2 \norm{\mathbf{c}}^2, \\
        &= \min_{i\in\{1,\ldots, s\}} \gwnorm{t_i}^2 (\cnorm{g}^2 - c_0^2).
    \end{align}
    From Lemma~\ref{lemma:gradient-norm-lb}, we have $\gwnorm{t_i} > \epsilon^{\degree{t_i}-1}\sqrt{|X|}$ and obtain
    \begin{align}
        \cnorm{g}
        &< \frac{\sqrt{\gwnorm{g}^2+c_0^2}}{\min\{\epsilon^{\degree{g}-1}, 1\}\sqrt{|X|}}
    \end{align}
    If $\mathbf{0} \in X$, then $g$ is $\epsilon$-approximately vanishing for $\mathbf{0}$ (i.e., $|c_0| \le \epsilon$). 
\end{proof}

\subsection{Advantages of gradient-weighted normalization}
Gradient-weighted normalization has two advantages. The first one is robustness against perturbations on the input points.  
\begin{proposition}\label{prop:perturbation}
    Let $(\mathcal{O}, G)\subset \mathcal{T}_n\times \mathcal{R}_n$ be the output of the ABM algorithm for $X = \{\mathbf{x}_1,\ldots, \mathbf{x}_N\}\subset\mathbb{R}_n$ and $\epsilon \ge 0$. Let $P = \{\mathbf{p}_1,\ldots, \mathbf{p}_N\} \subset\mathbb{R}_n$ be a set of small perturbations. 
    If $g \in G$ is gradient-weighted unitary, then, 
    \begin{align}
        g(X+P) \le \epsilon + \norm{P}_{\max} \degree{g}\sqrt{|X|} + o(\norm{P}_{\max}),    
    \end{align}
    where $\|P\|_{\max} = \max_{\mathbf{p}\in P} \norm{\mathbf{p}}$, and $o(\cdot)$ is the Landau's small o. 
\end{proposition}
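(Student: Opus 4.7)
The plan is to prove this by a first-order Taylor expansion of $g$ at each point $\mathbf{x}_i \in X$, and then combine three ingredients: (a) $g$ is $\epsilon$-approximately vanishing on $X$ (guaranteed by Algorithm~\ref{alg:ABMGN} and Theorem~\ref{thm:abmgn}(1)), (b) the gradient bound from Proposition~\ref{prop:gradient-norm-ub}, which crucially exploits the gradient-weighted unitarity of $g$, and (c) a Cauchy--Schwarz-type estimate to relate per-point perturbations to $\|\nabla g(X)\|$. I interpret the left-hand side as $\|g(X+P)\|$, consistent with the evaluation-vector notation fixed earlier in the paper.

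First, I would write the Taylor expansion coordinatewise: for each $i$,
\begin{align}
    g(\mathbf{x}_i + \mathbf{p}_i) = g(\mathbf{x}_i) + \nabla g(\mathbf{x}_i)^{\top}\mathbf{p}_i + R_i,
\end{align}
where $R_i = o(\|\mathbf{p}_i\|)$ as $\|\mathbf{p}_i\|\to 0$ (in fact $O(\|\mathbf{p}_i\|^2)$ by Taylor's theorem, since $g$ is a polynomial and hence smooth). Stacking these $N$ scalar identities gives a vector equation $g(X+P) = g(X) + \mathbf{r} + \mathbf{s}$, where $r_i = \nabla g(\mathbf{x}_i)^{\top}\mathbf{p}_i$ and $\mathbf{s}=(R_1,\dots,R_N)^{\top}$. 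Applying the triangle inequality yields
\begin{align}
    \|g(X+P)\| \le \|g(X)\| + \|\mathbf{r}\| + \|\mathbf{s}\|.
\end{align}

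Next I would estimate each of the three terms. Term $\|g(X)\|\le\epsilon$ follows from Proposition~\ref{prop:abmgn}(2) (or Theorem~\ref{thm:abmgn}(1)). For the linear term, apply the Cauchy--Schwarz inequality pointwise and then sum:
\begin{align}
    \|\mathbf{r}\|^2 = \sum_{i=1}^{N}\big(\nabla g(\mathbf{x}_i)^{\top}\mathbf{p}_i\big)^2 \le \sum_{i=1}^{N} \|\nabla g(\mathbf{x}_i)\|^2\,\|\mathbf{p}_i\|^2 \le \|P\|_{\max}^2\,\|\nabla g(X)\|^2.
\end{align}
Since $g$ is gradient-weighted unitary, Proposition~\ref{prop:gradient-norm-ub} gives $\|\nabla g(X)\| \le \degree{g}\sqrt{|X|}$, hence $\|\mathbf{r}\| \le \|P\|_{\max}\,\degree{g}\sqrt{|X|}$. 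For the remainder, each $R_i$ is $o(\|\mathbf{p}_i\|) \le o(\|P\|_{\max})$ as $\|P\|_{\max}\to 0$, so $\|\mathbf{s}\| \le \sqrt{N}\,\max_i|R_i| = o(\|P\|_{\max})$ (the constants depending on $g$ and $|X|$ but not on the direction of the perturbation).

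Combining the three bounds gives precisely the claimed inequality. The only mildly delicate point is ensuring the small-$o$ term is uniform in the direction of the perturbation vectors; this is immediate because $g$ is a fixed polynomial, so the second-order Taylor remainder is bounded by a constant (depending on $g$ and on a neighborhood of $X$) times $\|\mathbf{p}_i\|^2$, which is absorbed into $o(\|P\|_{\max})$. I do not anticipate any substantive obstacle beyond correctly invoking Proposition~\ref{prop:gradient-norm-ub}, which is the step where gradient-weighted normalization enters decisively and which would fail for coefficient-unitary polynomials (as Example~\ref{example:large-gcnorm} shows).
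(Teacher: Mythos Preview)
Your proposal is correct and follows essentially the same approach as the paper's proof: Taylor-expand $g$ at each point of $X$, apply the triangle inequality to split into the evaluation term, the first-order term, and the remainder, and then invoke $\|g(X)\|\le\epsilon$ together with Proposition~\ref{prop:gradient-norm-ub}. The paper's version is considerably terser (it skips the pointwise Cauchy--Schwarz step and the explicit handling of the remainder), so your write-up is a faithful and more detailed rendering of the same argument.
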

\begin{proof} 
Using the Taylor expansion and the triangle inequality, we get
    \begin{align}
        \|g(X+P)\|
        &\le \norm{g(X)} + \norm{P}_{\max} \norm{\nabla g(X)} + o(\norm{P}_{\max}), \\
        &\le \epsilon + \norm{P}_{\max} \degree{g}\sqrt{|X|} + o(\norm{P}_{\max}),
    \end{align}
    where, at the last inequality, we used Proposition~\ref{prop:gradient-norm-ub}.
\end{proof}

\begin{remark}
The inequality in Proposition~\ref{prop:perturbation} becomes $g(X+P) \le \epsilon + \norm{P}_{\max}\sqrt{|X|} + o(\norm{P}_{\max})$, if we use $Z=1$ in Definition~\ref{def:gradient-norm}.
\end{remark}

Another advantage of using gradient-weighted normalization is that it enables the ABM algorithm to output similar bases before and after scaling input points. 

\begin{theorem}\label{thm:scaling-consistency}
Let $X\subset\mathbb{R}^n$ be a set of points, $\epsilon > 0$, $\alpha \ne 0$, and $\sigma$ be a degree-compatible term ordering. 
Suppose that $(X, \epsilon, \sigma)$ and $(\alpha X, \abs{\alpha}\epsilon, \sigma)$ are given; the ABM algorithm with gradient-weighted normalization outputs $(\mathcal{O}, G)$ and $(\widehat{\mathcal{O}}, \widehat{G})$, respectively. Then, it holds that $\mathcal{O} = \widehat{\mathcal{O}}$. In addition, a one-to-one correspondence exists between $G$ and $\widehat{G}$. For the corresponding polynomials $g \in G$ and $\widehat{g} \in \widehat{G}$, the following holds:
\begin{align}
    \widehat{g}(\alpha X) &= \alpha g(X). 
\end{align}
Furthermore, it holds that $\mathrm{supp}(g) = \mathrm{supp}(\widehat{g})$ and the coefficients of $t \in \mathrm{supp}(g)$ in $g$ and $\widehat{g}$, say $v_t, \widehat{v}_t$, satisfy $v_t = \alpha^{\degree{t}-1} \widehat{v}_t$. 
\end{theorem}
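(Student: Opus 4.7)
The plan is to induct on the degree $d$ processed by the ABM algorithm (and, within each degree, on the order in which border terms are drawn from $L$), showing that at every iteration the tentative order ideals on the two runs are equal, the generalized eigenvalue problems are related by a simple diagonal rescaling, and the decisions to enlarge $\mathcal O$ versus to append to $G$ coincide. The one-to-one correspondence between $G$ and $\widehat G$ will then follow step by step, together with the scaling relations on supports and coefficients.

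\noindent\textbf{Step 1: scaling of terms and of the gradient-weighted norm.} A direct computation for a monomial $t = x_1^{\alpha_1}\cdots x_n^{\alpha_n}$ gives $t(\alpha \mathbf{x}) = \alpha^{\degree{t}} t(\mathbf{x})$, hence $t(\alpha X) = \alpha^{\degree{t}} t(X)$. Differentiating, $\pdiv{t}{x_k}(\alpha\mathbf{x}) = \alpha^{\degree{t}-1} \pdiv{t}{x_k}(\mathbf{x})$, so $\|\nabla t(\alpha X)\| = |\alpha|^{\degree{t}-1}\|\nabla t(X)\|$; since $Z$ depends only on $t$, we obtain $\gwnormx{t}{\alpha X} = |\alpha|^{\degree{t}-1}\gwnormx{t}{X}$. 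These two identities are the only ingredients needed.

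\noindent\textbf{Step 2: the generalized eigenvalue problems are conjugate.} Assume inductively that the tentative order ideals on the two runs agree, say $\mathcal{O} = \widehat{\mathcal{O}} = \{o_1,\ldots,o_s\}$, and that the algorithm is considering the same border term $b$ (it does, because $\sigma$ is independent of the points and $\partial\mathcal{O}$ depends only on $\mathcal{O}$). Let $M, D$ be the matrices at \textbf{S2} for the run on $X$ and $\widehat M, \widehat D$ those for $\alpha X$. Writing $t_1=b, t_2=o_1,\dots, t_{s+1}=o_s$ and setting
\begin{align}
    S = \mathrm{diag}\bigl(\alpha^{\degree{t_1}},\ldots,\alpha^{\degree{t_{s+1}}}\bigr),\qquad T = \mathrm{diag}\bigl(|\alpha|^{\degree{t_1}-1},\ldots,|\alpha|^{\degree{t_{s+1}}-1}\bigr),
\end{align}
Step~1 yields $\widehat M = M S$ and $\widehat D = T D$, and a quick check gives $T^2 = \alpha^{-2} S^2$. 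Substituting into $\widehat M^\top\widehat M\,\widehat{\mathbf v} = \widehat\lambda\,\widehat D^2\,\widehat{\mathbf v}$ and using that all four matrices are diagonal (so $S$ and $D^2$ commute), one obtains
\begin{align}
    M^\top M\,(S\widehat{\mathbf v}) \;=\; \frac{\widehat\lambda}{\alpha^2}\,D^2\,(S\widehat{\mathbf v}).
\end{align}
Thus the map $\widehat{\mathbf v}\mapsto S\widehat{\mathbf v}$ is a bijection between the generalized eigenpairs of $(\widehat M^\top\widehat M,\widehat D^2)$ and those of $(M^\top M, D^2)$, with $\widehat\lambda = \alpha^2\lambda$. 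In particular $\sqrt{\widehat\lambda_{\min}} = |\alpha|\sqrt{\lambda_{\min}}$, so the test $\sqrt{\widehat\lambda_{\min}}\le|\alpha|\epsilon$ is \emph{equivalent} to $\sqrt{\lambda_{\min}}\le\epsilon$: the two runs take the same branch, and the induction hypothesis is preserved.

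\noindent\textbf{Step 3: correspondence of polynomials.} When the branch produces a new basis element, we must relate $\mathbf v_{\min}$ and $\widehat{\mathbf v}_{\min}$, which are normalized by $\mathbf v^\top D^2\mathbf v = 1$ and $\widehat{\mathbf v}^\top\widehat D^2\widehat{\mathbf v}=1$ respectively. A direct substitution gives $(S\widehat{\mathbf v}_{\min})^\top D^2 (S\widehat{\mathbf v}_{\min}) = \alpha^2$, so $S\widehat{\mathbf v}_{\min}/|\alpha|$ is $D$-normalized. Choosing the free sign of $\widehat{\mathbf v}_{\min}$ so that $\mathbf v_{\min} = \mathrm{sgn}(\alpha)\,S\widehat{\mathbf v}_{\min}/|\alpha|$, one obtains componentwise $v_j = \alpha^{\degree{t_j}-1}\widehat v_j$. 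Since $\alpha\ne 0$, this forces $\mathrm{supp}(g) = \mathrm{supp}(\widehat g)$, and the coefficient relation in the statement follows. Finally,
\begin{align}
    \widehat g(\alpha X) \;=\; \sum_j \widehat v_j\,t_j(\alpha X) \;=\; \sum_j \widehat v_j\,\alpha^{\degree{t_j}} t_j(X) \;=\; \alpha\sum_j v_j\,t_j(X) \;=\; \alpha\, g(X),
\end{align}
which is the claimed evaluation identity. Iterating Steps~2--3 over all border terms and all degrees gives $\mathcal O = \widehat{\mathcal O}$ and a canonical bijection $G\leftrightarrow \widehat G$ with the stated properties.

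\noindent\textbf{Anticipated difficulty.} The essentially combinatorial part (keeping the induction synchronized and knowing that the same $b$ is being processed on both sides) is easy because $\sigma$ and $\partial\mathcal O$ depend only on the monomials. The main technical friction is bookkeeping: verifying $\widehat M = MS$ and $\widehat D = TD$, handling the sign issue $\alpha$ vs.\ $|\alpha|$ that arises because $S$ has powers of $\alpha$ while $T$ has powers of $|\alpha|$, and in particular confirming that one may freely absorb $\mathrm{sgn}(\alpha)$ into the overall sign of $\widehat{\mathbf v}_{\min}$. A minor but genuine subtlety is the degenerate diagonal entry $\gwnorm{1} = 0$ in $D$; however $S$ acts as the identity on the coordinate of $1$ (since $\alpha^{0}=1$), so the conjugation argument goes through for the degenerate problem exactly as for the non-degenerate ones.
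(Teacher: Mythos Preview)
Your proof is correct and follows essentially the same route as the paper: induction over the iterations of the ABM algorithm, together with the diagonal rescaling $\widehat M = MS$, $\widehat D^2 = \alpha^{-2}S D^2 S$ that conjugates the two generalized eigenvalue problems and forces the same branch to be taken at every step. In fact your bookkeeping is slightly more careful than the paper's---you distinguish $S$ (powers of $\alpha$) from $T$ (powers of $|\alpha|$) and explicitly absorb $\mathrm{sgn}(\alpha)$ into the eigenvector sign, and you flag the degenerate diagonal entry $\gwnorm{1}=0$, both of which the paper passes over silently.
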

\begin{proof}
    Let us consider two processes of the ABM algorithm; one for $(X,\epsilon, \sigma)$ and the other for $(\alpha X,\abs{\alpha}\epsilon, \sigma)$. We use the notations in Algorithm~\ref{alg:ABMGN} for the former process and add $\widehat{\cdot}$ to the notations in the latter process. 
    At the initialization, the claim holds because $\mathcal{O} = \widehat{\mathcal{O}} = \{1\}$ and $G = \widehat{G} = \{\}$.
    Assume that the claim holds true for several iterations, and now we are at \textbf{S1} with $\mathcal{O} = \widehat{\mathcal{O}}$ and $(G, \widehat{G})$ that satisfy the correspondence, and $L = \widehat{L}$. 
    Note that, for any term $t\in\mathcal{T}_n$, $t(\alpha X) = \alpha^{\degree{t}}t(X)$, and $\gwnormx{t}{\alpha X} = \alpha^{\degree{t}-1}\gwnorm{t}$. Thus, $t(\alpha X)/\gwnormx{t}{\alpha X} =  \alpha t(X)/\gwnormx{t}{X}$. 
    Therefore, by defining $S=\mathrm{diag}(\alpha^{\degree{b}}, \alpha^{\degree{o_1}}, \ldots, \alpha^{\degree{o_s}})$, 
    \begin{align}
        \widehat{M}^{\top}\widehat{M}\widehat{\mathbf{v}}_{\min}
        &= \widehat{\lambda}_{\min} \widehat{D}^2 \widehat{\mathbf{v}}_{\min},\\
        \iff M^{\top}MS\widehat{\mathbf{v}}_{\min}
        &= \widehat{\lambda}_{\min} \alpha^{-2} D^2S\widehat{\mathbf{v}}_{\min},
    \end{align}
    from which we obtain $\lambda_{\min} = \alpha^{-2} \widehat{\lambda}_{\min}$ and $\mathbf{v}_{\min}\propto S\widehat{\mathbf{v}}_{\min}$ at \textbf{S2}. 
    This indicates that thresholding $\sqrt{\lambda_{\min}}$ by $\epsilon$ in the first process is equivalent to thresholding $\sqrt{\widehat{\lambda}_{\min}}$ by $|\alpha|\epsilon$ in the second one. 
    Furthermore, by comparing the constraint of each generalized eigenvalue problem, $1 = \mathbf{v}_{\min}^{\top}D^2 \mathbf{v}_{\min}$ and $1 = \widehat{\mathbf{v}}_{\min}^{\top}\widehat{D}^2 \widehat{\mathbf{v}}_{\min}$, 
    we obtain $\mathbf{v}_{\min}=\alpha^{-1} S\widehat{\mathbf{v}}_{\min}$.
    Thus, at \textbf{S3}, the coefficients of the $i$-th term $t_i$ of $g$ and $\widehat{g}$ are related as $v_i 
     = \alpha^{\degree{t_i}-1}\widehat{v}_i$. 
    In summary, if $g = v_1b + v_2o_1 + \cdots, v_{s+1}o_{s}$ is $\epsilon$-approximately vanishing for $X$, then $\widehat{g} = \widehat{v}_1\widehat{b} + \widehat{v}_2\widehat{o}_1 + \cdots, \widehat{v}_{s+1}\widehat{o}_{s}$ is $\abs{\alpha}\epsilon$-approximately vanishing for $\alpha X$, and vice versa. If $b$ is appended to $\mathcal{O}$, then it should also be appended to $\widehat{\mathcal{O}}$. Thus, $\mathcal{O} = \widehat{\mathcal{O}}$; otherwise, $g$ and $\widehat{g}$ are appended to $G$ and $\widehat{G}$, respectively. Hence, $G$ and $\widehat{G}$ maintain the correspondence. 
\end{proof}

\begin{example}\label{example:scaling-consistency}
Let $X$ be a set of six perturbed sample points from a unit circle: 
\begin{align}
X = \{(0.39, 0.89), (-0.54, 0.93), (-0.94, -0.20), (-0.58,-0.91), (0.38,-081), (0.82,-0.01)\}.    
\end{align}
Given $X$, $\epsilon=0.1$, and the graded reverse lexicographic order $\sigma$, the ABM algorithm with gradient normalization computes $G$, which contains a single quadratic polynomial $g$ that is close to a unit circle, two cubic polynomials, and two degree-four polynomials. For $(\alpha X, \alpha\epsilon, \sigma)$, where $\alpha = 0.1$, the ABM algorithm with gradient normalization gives $\widehat{G}$, which has the same configuration as $G$. Let $\widehat{g} \in \widehat{G}$ be the quadratic polynomial. 
\begin{align}
    g &= 0.491x^2 - 0.0322xy + 0.350y^2 + 0.0866x - 0.0165y - 0.382,\\ 
    \widehat{g} &= 4.91x^2 - 0.322xy + 3.50y^2 + 0.0866x - 0.0165y - 0.0382. 
\end{align}
Notably, the coefficients of the constant, linear, and quadratic terms in $g$ are $0.1^{-1}$, $0.1^{0}$, and $0.1$ times of those in $\widehat{g}$, respectively. Meanwhile, the ABM algorithm with coefficient normalization outputs the basis sets in different configurations for $(X, \epsilon, \sigma)$ and $(\alpha X, \alpha\epsilon, \sigma)$. 
\end{example}

Theorem~\ref{thm:scaling-consistency} works even in the approximate case (i.e., $\epsilon>0$) and provides a theoretical justification for the scaling of the points at the preprocessing stage. That is, even if we scale a set of points before computing a basis, e.g., for the numerical stability, there is a corresponding basis for the set of points before the scaling, and this basis can be retrieved from the basis computed from the scaled points. This is not the case with coefficient normalization.

\begin{proposition}\label{prop:coefficient-normalization-failure}
Let $X \subset\mathbb{R}^n$ be a finite set of points. Let $\epsilon, \epsilon_{\alpha} > 0$ and $\alpha \ne 0$. Let $\sigma$ be a degree-compatible term ordering. Suppose the ABM algorithm with coefficient normalization receives $(X, \epsilon, \sigma), (\alpha X, \epsilon_{\alpha}, \sigma)$ and outputs $(\mathcal{O}, G), (\mathcal{O}_{\alpha}, G_{\alpha})$, respectively. If $G$ contains polynomials of different degrees, and if polynomials in $G$ are not strictly vanishing for $X$, then, there exists an $\alpha\ne 0$ such that $\mathcal{O} \ne \mathcal{O}_{\alpha}$ (consequently, $G \ne G_{\alpha}$). 
\end{proposition}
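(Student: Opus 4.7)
The strategy is to choose $|\alpha|$ very small and show that the algorithm on $\alpha X$ collapses the order ideal to the trivial one $\mathcal{O}_\alpha = \{1\}$, while the hypotheses on $G$ force $\mathcal{O} \supsetneq \{1\}$.

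First I would extract a structural consequence of the hypothesis. If $G$ contains polynomials of two distinct degrees, then in particular $G$ contains a polynomial of degree at least $2$. Since every polynomial produced by the ABM algorithm has the same degree as the border term that spawned it, the border $\partial\mathcal{O}$ must contain a term of degree $\ge 2$ at some iteration, which already forces $\mathcal{O}$ to contain at least one indeterminate $x_k$. Hence $\mathcal{O} \supsetneq \{1\}$.

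The core estimate is a uniform upper bound coming from the trial coefficient vector $\mathbf{v} = e_b$: for any tentative border term $b$ processed in step \textbf{S2} of Algorithm~\ref{alg:ABMGN} specialized to coefficient normalization (i.e.\ with $D = I$),
\begin{align}
\sqrt{\lambda_{\min}(\alpha)} \;\le\; \|b(\alpha X)\| \;=\; |\alpha|^{\degree{b}}\,\|b(X)\|.
\end{align}
Because $\epsilon_\alpha > 0$ is fixed and because there are only $n$ degree-one border candidates $x_1,\ldots,x_n$ that can ever arise while the tentative order ideal remains $\widetilde{\mathcal{O}}_\alpha = \{1\}$, I can pick $|\alpha|$ small enough (uniformly in $k$) so that $\sqrt{\lambda_{\min}(\alpha)} \le \epsilon_\alpha$ at every degree-one iteration. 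Each $x_k$ is then appended to $G_\alpha$ instead of $\mathcal{O}_\alpha$, leaving $\mathcal{O}_\alpha = \{1\}$ at the end of degree $1$. At degree $2$ the set $L_\alpha = \{b \in \partial\{1\} : \degree{b}=2\}$ is empty, so the algorithm terminates with $\mathcal{O}_\alpha = \{1\} \neq \mathcal{O}$. The consequence $G_\alpha \neq G$ follows immediately because $G_\alpha$ is then made only of degree-one polynomials, whereas $G$ contains a polynomial of degree $\ge 2$.

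The main obstacle I anticipate, and the role I expect for the ``polynomials in $G$ are not strictly vanishing'' hypothesis, is the edge case $\|x_k(X)\|=0$ for some $k$, which would make the bound $|\alpha|^{\degree{b}}\|b(X)\|$ vacuous when $b=x_k$. This cannot happen under the stated hypothesis: if $\|x_k(X)\|=0$, then the original algorithm on $X$ would place the strictly vanishing polynomial $x_k$ into $G$, contradicting the assumption. With this degenerate case excluded, the uniform smallness of $|\alpha|$ is legitimate and the argument closes.
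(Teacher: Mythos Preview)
Your argument is sound under the literal quantifier order of the statement---$\epsilon_\alpha$ fixed first, then $\alpha$ chosen---and it is more elementary than the paper's: you drive $\mathcal{O}_\alpha$ down to $\{1\}$ by shrinking $|\alpha|$ until every linear column $x_k(\alpha X)$ has norm below $\epsilon_\alpha$, while the different-degrees hypothesis guarantees $\mathcal{O}\supsetneq\{1\}$. (A small aside: the case $\|x_k(X)\|=0$ is not actually an obstacle, since then $\sqrt{\lambda_{\min}}\le 0\le\epsilon_\alpha$ holds trivially; so your argument does not really consume the non-strict-vanishing hypothesis at all, which is already a hint that you are proving something weaker than intended.)

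The paper establishes a stronger conclusion than you do: an explicit threshold on $|\alpha|$, \emph{independent of $\epsilon_\alpha$}, below which no choice of $\epsilon_\alpha$ yields $\mathcal{O}=\mathcal{O}_\alpha$. The mechanism is a squeeze on $\epsilon_\alpha$. Reproducing the step that outputs the lowest-degree basis polynomial $g$ (of degree $\tau$) forces $\epsilon_\alpha\ge\sigma_{\min}(\mathcal{M}_g(\alpha X))\ge|\alpha|^\tau\|g(X)\|$, while reproducing the step that admits the highest-degree order term $o_\omega$ (of degree $\omega>\tau$) forces $\epsilon_\alpha<\sigma_{\min}(\mathcal{M}_\omega(\alpha X))\le|\alpha|^\omega\|o_\omega(X)\|$. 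Because $\omega>\tau$, the two bounds cross for small $|\alpha|$ and the feasible interval for $\epsilon_\alpha$ disappears. This is where the different-degrees hypothesis is used at full strength (to obtain $\omega>\tau$) and where non-strict-vanishing enters (to make $\|g(X)\|>0$ so the lower bound is nontrivial). The resulting quantitative bound, Eq.~\eqref{eq:alpha-ub}, is precisely what the numerical experiments in Section~\ref{sec:experiment} are designed to probe, so the paper's route buys content that your collapse-to-$\{1\}$ argument, with its $\epsilon_\alpha$-dependent choice of $\alpha$, does not deliver.
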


\begin{proof}
Recall Remark~\ref{rem:diff-from-original-ABM}.
The ABM algorithm with coefficient normalization is Algorithm~\ref{alg:ABMGN} with an SVD of $M$ at Eq.~(\ref{eq:gep}).  Now, we consider two processes of the ABM algorithm; one is for $X$ and the other is for $\alpha X$. 
Let $g$ be a polynomial of the lowest degree in $G$. Let $\mathcal{M}_g = \{b_g\} \cup \mathcal{O}_{g}$ be a union of the border term of $g$ and the tentative order ideal when $g$ is obtained in the process for $X$. To obtain $g$, an SVD is performed to $\mathcal{M}_g(X)$. Now, suppose that, in the process for $\alpha X$, $\epsilon_{\alpha}$ is properly set to obtain $\mathcal{O}_g$, and now $\mathcal{M}_g(\alpha X)$ is to be dealt with by an SVD. Note that if such $\epsilon_{\alpha}$ does not exist, the claim already holds true. Let $\Sigma = \text{diag}\qty(\alpha^{\degree{t_1}},\alpha^{\degree{t_2}}, \ldots, \alpha^{\degree{t_s}})$, where $\mathcal{M}_g = \{t_1, t_2, \ldots, t_s\}$.
Using $\mathcal{M}_g(\alpha X) = \mathcal{M}_g(X)\Sigma$,
\begin{align}
     \sigma_{\min}\qty(\mathcal{M}_g(\alpha X )) &\ge \sigma_{\min}\qty(\mathcal{M}_g(X))\sigma_{\min}\qty(\Sigma), \\
    &= \sigma_{\min}\qty(\mathcal{M}_g(X))\min\qty{1, |\alpha|^{\tau}}, 
\end{align}
where $\sigma_{\min}(\cdot)$ denotes the smallest singular value of matrix, and $\tau = \degree{g}$.
When $|\alpha| \le 1$, we have $\sigma_{\min}\qty(\mathcal{M}_g(\alpha X )) \ge \|g(X)\||\alpha|^{\tau}$. Thus, $\epsilon_{\alpha}$ must satisfy $\epsilon_{\alpha} \ge |\alpha|^{\tau}\|g(X)\|$.

Next, by the assumption, the ABM algorithm does not terminate at degree $\tau$. Let $\omega > \tau$ be the highest degree of order term $o\in \mathcal{O}$. Let $\mathcal{M}_{\omega}$ be a union of a tentative border term and tentative order ideal to obtain such $o$ through an SVD. Note that the smallest singular value of a matrix can be upper bounded by the minimum norm of the column vectors of the matrix. Thus, $\sigma_{\min}(\mathcal{M}_{\omega}(X))$ can be upper bounded by the norm of the evaluation vector of a certain degree-$\omega$ order term $o_{\omega}$. 
Since $\|o_{\omega}(\alpha X)\|= |\alpha|^{\omega} \|o_{\omega}(X)\|$, we have $\sigma_{\min}\qty(\mathcal{M}_{\omega}(\alpha X)) \le |\alpha|^{\omega} \|o_{\omega}(X)\|$. 
To summarize, $\epsilon_{\alpha}$ must satisfy
\begin{align}
    |\alpha|^{\tau}\|g(X)\| \le \epsilon_{\alpha} <  |\alpha|^{\omega}\|o_{\omega}(X)\|.\label{eq:eps-range}
\end{align}
In other words, if $|\alpha|^{\omega}\|o_{\omega}(X)\| \le |\alpha|^{\tau}\|g(X)\|$, i.e., if
\begin{align}
    |\alpha| &\le \qty(\frac{\|g(X)\|}{\|o_{\omega}(X)\|})^{1/(\omega-\tau)}, \label{eq:alpha-ub}
\end{align}
then there exists no $\epsilon_{\alpha}$ such that $\mathcal{O}= \mathcal{O}_{\alpha}$. Note that $\|o_{\omega}(X)\| > \epsilon \ge \|g(X)\|$ and $\omega > \tau$. Thus, the upper bound is nontrivial (tighter than $|\alpha|\le 1$).  
\end{proof}

Proposition~\ref{prop:coefficient-normalization-failure} implies that approximate computation of border basis with coefficient normalization is scale-sensitive. Particularly, the upper bound in Eq.~\eqref{eq:alpha-ub} becomes tighter when the gap of the two degrees---the highest degree of order terms and the lowest degree of basis polynomials---increases. 
Even one recovers from $X$ a good approximate border basis that is close to the true one, this might not be the case with $\alpha X$, and vice versa (and such $\alpha \ne 0$ always exists!).
Thus, the scaling parameter $\alpha$, as well as $\epsilon$, has to be carefully selected. 
The importance of the choice of scaling parameter from an algebraic perspective has not been discussed in literature. In \cite{heldt2009approximate}, the scaling is discussed in the numerical experiments from a numerical perspective such as computational time and maximum mean extent of vanishing, concluding that scaling points to $[-1, 1]^n$ is good for the numerical quality of computation. The advantage of gradient-weighted normalization is that one can avoid such dependency on scaling, and a set of points can be arbitrarily scaled for numerically stable computation. 

\section{Numerical experiments}\label{sec:experiment}
Here, we demonstrate that the scaling of data points is a crucial factor in the success of the approximate computation of border bases. First, using three datasets, we tested the ABM algorithm's ability to retrieve the target sets of polynomials from the scaled perturbed points. We observed that coefficient normalization is sensitive to scaling, whereas the gradient-weighted normalization is not owing to its scaling consistency (Theorem~\ref{thm:scaling-consistency}). Second, through a small numerical experiment, we show that the valid range of the scaling parameter follows Proposition~\ref{prop:coefficient-normalization-failure}. 

\subsection{Configuration retrieval test}
In the approximate setting, the target polynomial system and a system calculated from the perturbed points cannot be compared directly because the number of polynomials in the two systems may be different. We performed the following simple test. 
\begin{definition}[configuration retrieval test]
Let $G \subset\mathbb{R}[x_1,\ldots, x_n]$ be a finite set of polynomials and let $T$ be the maximum degree of polynomials in $G$. An algorithm $\mathcal{A}$, which calculates a set of polynomials $\widehat{G}\subset \mathbb{R}[x_1,\ldots, x_n]$ from a set of points $X\subset\mathbb{R}^n$, is considered to successfully retrieve the configuration of $G$ if $\forall t = 0, 1, \ldots, T, |G_t| = |\widehat{G}_t|$ is satisfied, where $G_t$ and $\widehat{G}_t$ denote the sets of degree-$t$ polynomials in $G$ and $\widehat{G}$, respectively.  
\end{definition}

The configuration retrieval test verifies if the algorithm outputs a set of polynomials that has the same configuration as the target system up to the maximum degree of polynomials in the target system. This is a necessary condition for a good approximate basis construction. Furthermore, to circumvent choosing $\epsilon$ of the ABM algorithm, we performed a linear search. Thus, a run of the ABM algorithm is considered to have passed the configuration retrieval test if there exists a proper $\epsilon$.
We considered three affine varieties.
\begin{align}
    V_1 &= \qty{(x,y) \in \mathbb{R}^2 \mid (x^2+y^2)^3 - 4x^2y^2 = 0}, \\
    V_2 &= \qty{(x, y, z) \in \mathbb{R}^3 \mid x + y - z = 0, x^3 - 9(x^2 - 3y^2) = 0}, \\
    V_3 &= \qty{(x, y, z) \in \mathbb{R}^3 \mid x^2-y^2z^2+z^3 = 0}.
\end{align}
We calculated the Gr\"{o}bner and border bases (by the ABM algorithm with two normalization methods) and confirmed that for each dataset, all the bases had the same configuration. 
Using its parametric representation, fifty points (say, $X_i^{*}$) were sampled from each $V_i$.
Each $X_i^{*}$ was preprocessed by subtracting the mean and scaling to make average $L_2$ norm of points unit. The sampled points were then perturbed by an additive Gaussian noise $\mathcal{N}(\mathbf{0}, \nu I)$, where $I$ denotes the identity matrix and $\nu\in \{0.01, 0.05\}$, and then recentered again. The set of such perturbed points from $X_i^{*}$ is denoted by $X_i$. Five scales $\alpha X_i, (\alpha=0.01, 0.1, 1.0, 10, 100)$ were considered. The linear search of $\epsilon$ was conducted with $[10^{-5}\alpha, \alpha)$ with a step size $10^{-3}\alpha$. We conducted 20 independent runs for each setting, changing the perturbation to $X_i^*$. 

Tables~\ref{table:SRT-noise01} and~\ref{table:SRT-noise05} summarize the results, corresponding to the perturbation level $\nu = 0.01, 0.05$, respectively. We first focus on Tables~\ref{table:SRT-noise01}. The success rate column displays the ratio of the successful configuration retrieval (i.e., the existence of a valid range of $\epsilon$) to 20 runs. With gradient-weighted normalization, the ABM algorithm succeeded in all datasets and scales (except $(V_2, \alpha=0.01)$), whereas with coefficient normalization, it succeeded only in specific scales (not necessarily $\alpha=1$). For numerically stable computation, the data points must be preprocessed in a certain range (in our case, mean-zero, unit average $L_2$ norm, and $\alpha=1$). However, our experiment shows that, with coefficient normalization, such preprocessing can lead the approximate border basis construction to fail. In contrast, gradient-weighted normalization provides robustness against such preprocessing. 

Another observation is that the valid range of $\epsilon$ and the extent of vanishing of gradient-weighted normalization both change in proportion to the scale. For example, at $V_2$, the ranges of valid $\epsilon$ changes as $(1.94, 2.17)\times 10^{-n}, (n=-3, -2, \ldots, 1)$ and the extent of vanishing changes as $1.80\times 10^{-n}, (n=-3, -2, \ldots, 1)$.  This tendency is supported by the scaling consistency. From Theorem~\ref{thm:scaling-consistency}, if the configuration retrieval test is passed with $(\alpha X, |\alpha|\epsilon)$ for some nonzero $\alpha =\alpha_0$, then it can be also passed by any other $\alpha \ne 0$. Note that although the extent of vanishing appear to be large at $\alpha=100$, the signal to noise ratio remains unchanged.
With coefficient normalization, the range of $\epsilon$ and the extent of vanishing change in an inconsistent way. At $V_3$, between $\alpha=0.01$ and $\alpha=1.0$, the scale of the range of $\epsilon$ differs by three order, while between $\alpha=1.0$ and $\alpha=100$ share the same order. For $\alpha=0.1$, the successful case was only $\epsilon=10^{-5}$, where the initial value and the step size of the linear search are $10^{-6}$ and $10^{-4}$, respectively.

When the the perturbation level increases to $\nu=0.05$~(Table~\ref{table:SRT-noise05}), similar results were observed; gradient-weighted normalization showed more robust against scaling of points than coefficient normalization, and the range of $\epsilon$ and the extent of vanishing changed proportional to the scaling. 
Besides, coefficient normalization resulted in lower success rate at $V_2$ and $V_3$. For example, at $(V_2, \alpha=0.1)$, the success rate dropped from 1.00 to 0.00. In contrast, gradient-weighted normalization retained its performance. This result implies the better stability of gradient-weighted normalization against perturbation.

\begin{table*}
    \centering
    \caption{Summary of the configuration retrieval test of 20 independent runs with 1 \% noise. Columns \textit{coeff.} and \textit{grad. w.} present the coefficient and gradient-weighted normalization, respectively. Column \textit{coeff. dist.} presents the distance of the normalized coefficient vectors between the systems, from 500 unperturbed and 100 perturbed points, repectively. Column \textit{e.v.} denotes the extent of vanishing at the unperturbed points. The values of the range, coefficient distance, and extent of vanishing are averaged values over 20 independent runs. As indicated by the success rate, the proposed gradient-weighted normalization approach is robust and consistent (see the proportional change in the range and the extent of vanishing) to the scaling, whereas coefficient normalization is not.}\label{table:SRT-noise01}
\begin{tabular}{c|crcccc}
    \toprule
        dataset & normalization & scaling $\alpha$ & range & coeff. dist. & e.v. & success rate \\
    \midrule
    \multirow{10}{*}{$V_1$} & \multirow{5}{*}{coeff.}  & 0.01        & --                             & --         & --                          & 0.00 [00/20]       \\
                       &                              & 0.1         & --                             & --         & --                          & 0.00 [00/20]       \\
                       &                              & 1.0         & [2.28, 2.61] $\times 10^{-2}$  &  1.39      & 1.27 $\times 10^{-2}$       & 1.00 [20/20]       \\
                       &                              & 10          & [1.80, 1.91] $\times 10^{-0}$  &  1.41      & 1.28                        & 1.00 [20/20]       \\
                       &                              & 100         & [1.31, 1.41] $\times 10^{-0}$  &  1.34      & 1.26                        & 0.85 [17/20]       \\
                       & \multirow{5}{*}{grad. w.}    & 0.01        & --                             &            & --                          & 0.00 [00/20]        \\
                       &                              & 0.1         & [3.33, 4.20] $\times 10^{-3}$  &  1.32      & 4.91 $\times 10^{-3}$       & 1.00 [20/20]       \\
                       &                              & 1.0         & [3.33, 4.20] $\times 10^{-2}$  &  1.32      & 4.91 $\times 10^{-2}$       & 1.00 [20/20]       \\
                       &                              & 10          & [3.33, 4.20] $\times 10^{-1}$  &  1.41      & 4.91 $\times 10^{-1}$       & 1.00 [20/20]       \\
                       &                              & 100         & [3.33, 4.20] $\times 10^{-0}$  &  1.41      & 4.91                        & 1.00 [20/20]       \\
    \midrule
    \multirow{10}{*}{$V_2$} & \multirow{5}{*}{coeff.}  & 0.01        & --                             & --         & --                          & 0.00 [00/20]       \\
                       &                              & 0.1         & --                             & --         & --                          & 0.00 [00/20]       \\
                       &                              & 1.0         & [0.69, 1.30] $\times 10^{-1}$  & 0.0121     & 3.67 $\times 10^{-2}$       & 1.00 [20/20]       \\
                       &                              & 10          & [1.43, 1.63] $\times 10^{-0}$  & 0.577      & 1.28                        & 1.00 [20/20]       \\
                       &                              & 100         & --                             &  --        & --                          & 0.00 [00/20]       \\
                       & \multirow{5}{*}{grad. w.}    & 0.01        & [1.94, 2.17] $\times 10^{-3}$  &  0.612     & 1.80 $\times 10^{-3}$       & 1.00 [20/20]        \\
                       &                              & 0.1         & [1.94, 2.17] $\times 10^{-2}$  &  0.518     & 1.80 $\times 10^{-2}$       & 1.00 [20/20]        \\
                       &                              & 1.0         & [1.94, 2.17] $\times 10^{-1}$  &  0.104     & 1.80 $\times 10^{-1}$       & 1.00 [20/20]        \\
                       &                              & 10          & [1.94, 2.17] $\times 10^{-0}$  &  0.455     & 1.80                        & 1.00 [20/20]        \\
                       &                              & 100         & [1.94, 2.17] $\times 10^{+1}$  &  0.681     & 1.80 $\times 10^{+1}$       & 1.00 [20/20]        \\
    \midrule
    \multirow{10}{*}{$V_3$} & \multirow{5}{*}{coeff.}  & 0.01        & --                             &  --        & --                          & 0.00 [00/20]       \\
                       &                              & 0.1         & [1.00, 1.00] $\times 10^{-6}$  &  0.836     & 1.93 $\times 10^{-6}$       & 0.65 [13/20]       \\
                       &                              & 1.0         & [7.06, 8.85] $\times 10^{-3}$  &   1.23     & 2.35 $\times 10^{-2}$       & 0.95 [19/20]       \\
                       &                              & 10          & [1.12, 1.26] $\times 10^{-0}$  &   1.20     & 2.92                        & 1.00 [20/20]       \\
                       &                              & 100         & [1.68, 1.72] $\times 10^{-0}$  &   1.05     & 2.18                        & 0.75 [15/20]       \\
                       & \multirow{5}{*}{grad. w.}    & 0.01        & [1.67, 2.36] $\times 10^{-4}$  &   1.30     & 5.97 $\times 10^{-4}$       & 1.00 [20/20]        \\
                       &                              & 0.1         & [1.67, 2.36] $\times 10^{-3}$  &   1.29     & 5.97 $\times 10^{-3}$       & 1.00 [20/20]        \\
                       &                              & 1.0         & [1.67, 2.36] $\times 10^{-2}$  &   1.29     & 5.97 $\times 10^{-2}$       & 1.00 [20/20]        \\
                       &                              & 10          & [1.67, 2.36] $\times 10^{-1}$  &   1.15     & 5.97 $\times 10^{-1}$       & 1.00 [20/20]        \\
                       &                              & 100         & [1.67, 2.36] $\times 10^{-0}$  &   1.36     & 5.97                        & 1.00 [20/20]        \\
    \bottomrule
\end{tabular}
\end{table*}

\begin{table*}
    \centering
    \caption{Summary of the configuration retrieval test of 20 independent runs with 5\% noise. Compared to the results in Table~\ref{table:SRT-noise01}, coefficient normalization decreases the success rate at $V_2$ and $V_3$, whereas gradient-weighted normalization retains the performance. }\label{table:SRT-noise05}
    \begin{tabular}{c|crcccc}
        \toprule
            dataset & normalization & scaling $\alpha$ & range & coeff. dist. & e.v. & success rate \\
        \midrule
        \multirow{10}{*}{$V_1$} & \multirow{5}{*}{coeff.}  & 0.01        & --                             & --         & --                          & 0.00 [00/20]       \\
                           &                              & 0.1         & --                             & --         & --                          & 0.00 [00/20]       \\
                           &                              & 1.0         & [2.28, 2.61] $\times 10^{-2}$  &  1.31      & 1.14 $\times 10^{-1}$       & 1.00 [20/20]       \\
                           &                              & 10          & [1.80, 1.91] $\times 10^{-0}$  &  1.41      & 4.13                        & 1.00 [20/20]       \\
                           &                              & 100         & [1.86, 1.91] $\times 10^{-0}$  &  1.20      & 3.34                        & 0.85 [17/20]       \\
                           & \multirow{5}{*}{grad. w.}    & 0.01        & --                             &            & --                          & 0.00 [00/20]        \\
                           &                              & 0.1         & [5.07, 5.66] $\times 10^{-3}$  &  1.22      & 2.61 $\times 10^{-2}$       & 1.00 [20/20]       \\
                           &                              & 1.0         & [5.07, 5.66] $\times 10^{-2}$  &  1.34      & 2.61 $\times 10^{-1}$       & 1.00 [20/20]       \\
                           &                              & 10          & [5.07, 5.66] $\times 10^{-1}$  &  1.37      & 2.61                        & 1.00 [20/20]       \\
                           &                              & 100         & [5.07, 5.66] $\times 10^{-0}$  &  1.40      & 2.61 $\times 10^{+1}$       & 1.00 [20/20]       \\
        \midrule
        \multirow{10}{*}{$V_2$} & \multirow{5}{*}{coeff.}  & 0.01        & --                             & --         & --                          & 0.00 [00/20]       \\
                           &                              & 0.1         & --                             & --         & --                          & 0.00 [00/20]       \\
                           &                              & 1.0         & --                             & --         & --                          & 0.00 [00/20]       \\
                           &                              & 10          & [3.64, 4.21] $\times 10^{-0}$  & 0.693      & 3.41                        & 1.00 [20/20]       \\
                           &                              & 100         & --                             &  --        & --                          & 0.00 [00/20]       \\
                           & \multirow{5}{*}{grad. w.}    & 0.01        & [4.11, 5.42] $\times 10^{-3}$  &  0.692     & 2.16 $\times 10^{-3}$       & 1.00 [20/20]        \\
                           &                              & 0.1         & [4.11, 5.42] $\times 10^{-2}$  &  0.667     & 2.16 $\times 10^{-2}$       & 1.00 [20/20]        \\
                           &                              & 1.0         & [4.11, 5.42] $\times 10^{-1}$  &  0.515     & 2.16 $\times 10^{-1}$       & 1.00 [20/20]        \\
                           &                              & 10          & [4.11, 5.42] $\times 10^{-0}$  &  0.565     & 2.16                        & 1.00 [20/20]        \\
                           &                              & 100         & [4.11, 5.42] $\times 10^{+1}$  &  0.692     & 2.16 $\times 10^{+1}$       & 1.00 [20/20]        \\
        \midrule
        \multirow{10}{*}{$V_3$} & \multirow{5}{*}{coeff.}  & 0.01        & --                             &  --        & --                          & 0.00 [00/20]       \\
                           &                              & 0.1         & [1.00, 1.00] $\times 10^{-6}$  &  0.416     & 2.10 $\times 10^{-6}$       & 0.30 [06/20]       \\
                           &                              & 1.0         & [0.90, 1.04] $\times 10^{-2}$  &   1.29     & 1.40                        & 0.95 [19/20]       \\
                           &                              & 10          & [1.32, 1.39] $\times 10^{-0}$  &   4.86     & 1.33                        & 1.00 [20/20]       \\
                           &                              & 100         & [1.67, 1.67] $\times 10^{-0}$  &   2.51     & 8.48 $\times 10^{-1}$       & 0.60 [12/20]       \\
                           & \multirow{5}{*}{grad. w.}    & 0.01        & [3.13, 3.68] $\times 10^{-4}$  &   1.38     & 2.37 $\times 10^{-3}$       & 1.00 [20/20]        \\
                           &                              & 0.1         & [3.13, 3.68] $\times 10^{-3}$  &   1.38     & 2.37 $\times 10^{-2}$       & 1.00 [20/20]        \\
                           &                              & 1.0         & [3.13, 3.68] $\times 10^{-2}$  &   1.34     & 2.37 $\times 10^{-1}$       & 1.00 [20/20]        \\
                           &                              & 10          & [3.13, 3.68] $\times 10^{-1}$  &   1.17     & 2.37                        & 1.00 [20/20]        \\
                           &                              & 100         & [3.13, 3.68] $\times 10^{-0}$  &   1.37     & 2.37 $\times 10^{+1}$       & 1.00 [20/20]        \\
        \bottomrule
    \end{tabular}
\end{table*}

\subsection{Valid range of scaling parameter at coefficient normalization}
Let $(G, \mathcal{O})$ and $(G_{\alpha}, \mathcal{O}_{\alpha})$ be the outputs of the ABM algorithm with coefficient normalization given $(X, \epsilon)$ and $(X_{\alpha}, \epsilon_{\alpha})$. We are interested in the range of $\alpha$ a valid range of $\epsilon_{\alpha}$, which yields $\mathcal{O} = \mathcal{O}_{\alpha}$.
Equation~\eqref{eq:alpha-ub} in Proposition~\ref{prop:coefficient-normalization-failure} argues that the valid range of $\epsilon_{\alpha}$ is lower bounded by a certain quantity (say, $\xi$). Here, we test how tight this bound is as well as confirm the scale-sensitivity of coefficient normalization again. We used $V_2$ from the previous experiment, and points are sampled, prepossessed, and perturbed by 1\% noise to obtain $X_2$ in the same way. From Table~\ref{table:SRT-noise01}, the ABM algorithm with coefficient normalization succeeds with $\alpha = 1.0, 10$.
\footnote{Note that, strictly speaking, we are not solving the same problem. In the previous experiment, a \textit{success} means that similar bases are obtained between unperturbed and perturbed points, whereas here, it means that the same order ideal (up to a certain degree) is obtained between nonscaled and scaled points.} 
Thus, we now work with $\widehat{X}_2 = \alpha X_2, (\alpha=1.0, 5.0, 10)$. We then calculate the range of $\beta$ such that given $\beta\widehat{X}_2$, the algorithm outputs the same order ideal as the one from $\widehat{X}_2$ up to a certain degree as in the configuration retrieval test. To obtain the range, linear search was performed with $\beta \in (0, 1.0]$ with step size 0.01. The results are shown in Fig.~\ref{fig:coeff-failure}. We can actually observe that the successful regions (shady parts) are lower bounded by $\xi$. Since coefficient normalization mainly works with $\alpha \in [1.0, 10]$, the bound $\epsilon$ has relatively large value at $\alpha = 1.0$, which indicates a larger failure region below $\alpha$. Since Eq.~\eqref{eq:alpha-ub} only provides a sufficient condition for nonexistence, the bound cannot be said very tight; however, the validity of the bound is confirmed. 

\begin{figure*}[t]
  \includegraphics[width=\linewidth]{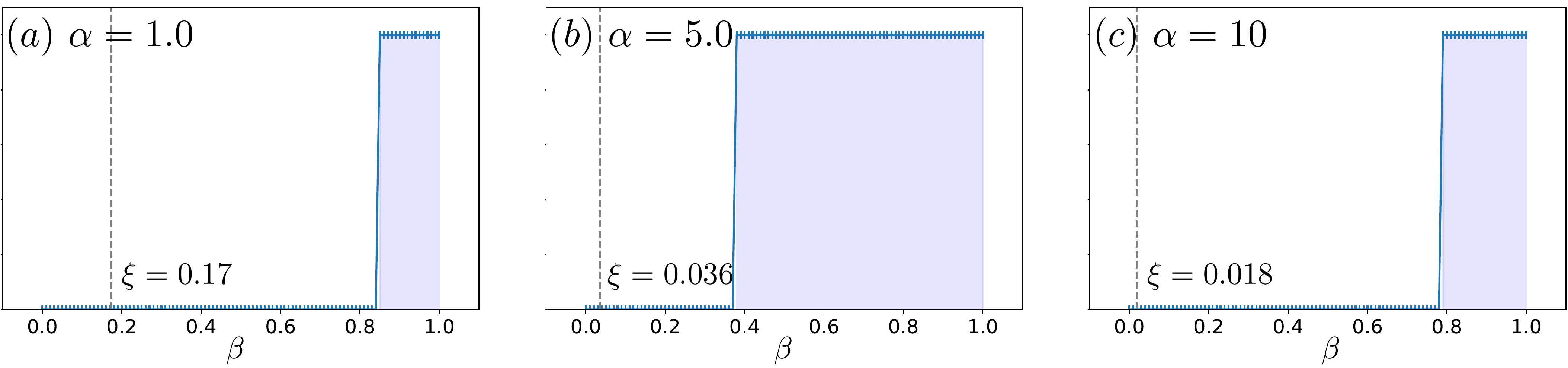}
  \caption{Theoretical lower bound $\xi$~(dashed line) for $\beta$ with which one can retrieve the same order ideal of the non-scaled case (i.e., $\beta=1$). Valid range (shaded region) of $\beta$ is found by a linear search. (a), (b), and~(c) corresponds to $\alpha=1.0, 5.0, 10$. The shaded regions are at the right side of bound as Proposition~\ref{prop:coefficient-normalization-failure} predicts.}
  
  \label{fig:coeff-failure}
\end{figure*}

\section{Conclusion}
In this study, we proposed gradient-weighted normalization for the approximate border basis computation of vanishing ideals. We showed its validity in the border basis computation by proving that the gradient-weighted norm always takes nonzero values for order terms and border basis polynomials. The introduction of gradient-weighted normalization is compatible with the existing analysis of approximate border bases and the computation algorithms. The time complexity does not change either. The data-dependent nature of gradient-weighted normalization provides several important properties (stability against perturbation and scaling consistency) to basis computation algorithms. In particular, through theory and numerical experiments, we highlighted the critical effect of the scaling of points on the success of approximate basis computation. We consider that the present study provides a new perspective and ingredients to analyze the border basis computation in the approximate setting, where perturbed points should be dealt with and stable computation is required. 

\section*{Acknowledgement}
We would like to thank Yuichi Ike for helpful discussions. This work was supported by JST, ACT-X Grant Number JPMJAX200F, Japan.

\bibliographystyle{ACM-Reference-Format}

\end{document}